\newtheorem{theorem}{Theorem}[section]
\newtheorem{lemma}[theorem]{Lemma}
\newtheorem{proposition}[theorem]{Proposition}
\newtheorem{remark}[theorem]{Remark}
\newtheorem{hypothesis}[theorem]{Hypothesis}
\newtheorem{definition}[theorem]{Definition}
\numberwithin{equation}{section}
\newcommand{\R}{\mathbb{R}}
\newcommand{\Rint}{\mathbb{R}_{++}^n}
\newcommand{\cP}{\mathcal{P}}
\newcommand{\cL}{\mathcal{L}}
\newcommand{\Ulb}{U^{\lambda,b}}
\newcommand{\argmin}{\mathrm{arg min}}
\newcommand{\argmax}{\mathrm{arg max}}
\newcommand{\prox}{\mathrm{prox}}
\newcommand\blfootnote[1]{%
  \begingroup
  \renewcommand\thefootnote{}\footnote{#1}%
  \addtocounter{footnote}{-1}%
  \endgroup
}
\newenvironment{@abssec}[1]{%
     \if@twocolumn
       \section*{#1}%
     \else
       \vspace{.05in}\footnotesize
       \parindent .2in
        \ignorespaces 
     \fi}
\newenvironment{keywords}{\begin{@abssec}{\keywordsname}}{\end{@abssec}}
\newcommand\keywordsname{Key words}
\newenvironment{AMS}{\begin{@abssec}{\AMSname}}{\end{@abssec}}
\newcommand\AMSname{AMS subject classifications}
\begin{document}
\title{Efficient Bayesian computation for low-photon imaging problems
}
\author{Savvas Melidonis$^{1}$ \and Paul Dobson$^{2}$ \and Yoann Altmann$^{3}$ \and Marcelo Pereyra$^{1,4}$ \and Konstantinos C. Zygalakis$^{2,4}$}

\maketitle

\begin{abstract}
This paper studies a new and highly efficient Markov chain Monte Carlo (MCMC) methodology to perform Bayesian inference in low-photon imaging problems, with particular attention to situations involving observation noise processes that deviate significantly from Gaussian noise, such as binomial, geometric and low-intensity Poisson noise. These problems are challenging for many reasons. From an inferential viewpoint, low-photon numbers lead to severe identifiability issues, poor stability and high uncertainty about the solution. Moreover, low-photon models often exhibit poor regularity properties that make efficient Bayesian computation difficult; e.g., hard non-negativity constraints, non-smooth priors, and log-likelihood terms with exploding gradients. More precisely, the lack of suitable regularity properties hinders the use of state-of-the-art Monte Carlo methods based on numerical approximations of the Langevin stochastic differential equation (SDE), as both the SDE and its numerical approximations behave poorly. We address this difficulty by proposing an MCMC methodology based on a reflected and regularised Langevin SDE, which is shown to be well-posed and exponentially ergodic under mild and easily verifiable conditions. This then allows us to derive four reflected proximal Langevin MCMC algorithms to perform Bayesian computation in low-photon imaging problems. The proposed approach is demonstrated with a range of experiments related to image deblurring, denoising, and inpainting under binomial, geometric and Poisson noise.\end{abstract}

\begin{keywords}
\textbf{Key words:} Low-photon imaging, computational imaging, inverse problems, Bayesian inference, Markov chain Monte Carlo methods, 
\newline \hspace*{0.75in}
uncertainty quantification, proximal algorithms.
\end{keywords}

\begin{AMS}
\textbf{AMS subject classifications:} 65C40, 68U10, 62F15, 65C60, 65J22, 62E17, 62F30, 62H10, 68W25
\end{AMS}

\blfootnote{\textbf{Funding:} This work was supported by the UK Research and Innovation (UKRI) Engineering and Physical Sciences Research Council (EPSRC) grants EP/V006134/1 , EP/V006177/1 and EP/T007346/1, the UK Royal Academy of Engineering under the Research Fellowship Scheme (RF201617/16/31) and the Leverhulme Trust (RF/ 2020-310).}
\footnotetext[1]{School of Mathematical and Computer Sciences, Heriot-Watt University, Edinburgh, EH14 4AS, Scotland, UK (sm2041@hw.ac.uk, m.pereyra@hw.ac.uk).}
\footnotetext[2]{School of Mathematics, University of Edinburgh, Edinburgh, EH9 3FD, Scotland, UK (pdobson@ed.ac.uk, kzygalak@exseed.ed.ac.uk).}
\footnotetext[3]{School of Engineering and Physical Sciences, Heriot-Watt University, Edinburgh, EH14 4AS, Scotland, UK (y.altmann@hw.ac.uk).}
\footnotetext[4]{Maxwell Institute for Mathematical Sciences, Bayes Centre, 47 Potterrow, Edinburgh, Scotland, UK.}

\section{Introduction}

Photon-limited imaging problems arise in modalities that measure the number of photons emitted or reflected by an object or scene of interest. Canonical examples include emission tomographic imaging \cite{Hohage}, fluorescence microscopy \cite{bertero2018inverse,Hohage}, astronomical imaging \cite{bertero2018inverse,starck2007astronomical}, and single-photon light detection and ranging (LIDAR) \cite{Altmann_2016,sparse_single_photon_lidar,7932527,7150537}. While these modalities have traditionally operated in moderately mild regimes, modern photon-limited imaging applications increasingly operate in low-photon to photon-starved regimes, particularly in the context of quantum-enhanced imaging technologies that exploit the particle nature of light and information in individual photons in order to go beyond the limitations of classical imaging paradigms \cite{single}.

An important characteristic of low-photon imaging problems is that their (discrete-valued) data exhibit statistical properties that deviate significantly from the Gaussian statistics encountered in many other imaging modalities. Mildly low-photon problems often exhibit Poisson or compound-Poisson statistics, whereas more challenging photon-starved problems exhibit approximately Bernoulli/binomial or geometric data \cite{7927467,bin_geo}. Poisson-type statistics arise from using sensors capable of discriminating photon-detection events within a given time period, assuming detector dead times can be neglected 
\cite{bin_geo,7932527,7150537}. The more challenging Bernoulli, binomial and geometric statistics arise from sensors that cannot accurately quantify photons beyond the first detection, such as Single-Photon Detectors \cite{single} and in particular Single-Photon Avalanche Diodes (SPADs) \cite{bin_geo,spad,first_photon_imaging,7932527}. 

Accurate recovery of images from low-photon data is difficult for many reasons. First, from an inferential perspective, the data have very limited information about the solution. This leads to image estimation problems that are severely ill-conditioned or ill-posed, and which exhibit high levels of intrinsic uncertainty. Second, from a computational perspective, the data fidelity terms associated with Poisson, binomial and geometric statistics have poor regularity properties (e.g., exploding gradients) and involve non-negativity constraints on the solution space. This makes it difficult to use gradient-based computation algorithms that scale efficiently to large problems \cite{durmus2016efficient,green2015bayesian}. These difficulties can be mitigated to some extent by using tools from proximal optimisation as proposed in \cite{5492199}. Some of these ideas will be revisited in this paper in the context of Bayesian computation, and combined with new strategies for addressing these difficult problems.

The literature considers three main frameworks to solve low-photon imaging problems: Bayesian statistics \cite{7927467,bin_geo,Marnissi2017,Marnissi2016,8683031,zhou2019bayesian}, machine learning \cite{Bregman_priors,pnp_ADMM_restore,proximal_BM3D,ryu2019plugandplay,Photon_Limited_Unrolling}, and the variational framework \cite{convex_approach,5492199,SPIRAL_paper}. This paper focuses on the Bayesian statistical framework because it enables the use of Bayesian decision theory to analyse these difficult imaging problems \cite{robert2007bayesian}. This theory allows deriving estimators, quantifying uncertainty, automatically adjusting unknown model parameters, and performing model selection in the absence of ground truth, all of which are relevant in problems with high intrinsic uncertainty \cite{2018,durmus2016efficient,pereyra2016maximumaposteriori,vidal2020maximum}.

Bayesian computation for low-photon imaging problems is usually addressed in the following three ways: proximal optimisation, variational inference (VI), e.g., variational Bayes (VB) \cite{survey} or Expectation-Propagation (EP) \cite{Ko15,Yao_SIAM2022,Zhang_2019}, and Markov chain Monte Carlo (MCMC) simulation \cite{10.5555/1051451}. Proximal optimisation strategies are mainly used for maximum-a-posteriori (MAP) estimation in Bayesian models that are log-concave, where MAP estimation is a convex problem that can be solved efficiently by using provably convergent and scalable algorithms \cite{convex_approach,5492199,SPIRAL_paper}. Some basic forms of uncertainty visualisation and quantification can also be formulated as convex optimisation problems \cite{pereyra2016maximumaposteriori,Scalable_quanti}. Despite their theoretical rigour, these methods can only support very limited inferences and log-concave priors. Some recent works consider generalisations to data-driven plug-and-play (PnP) priors that can deliver more accurate results, usually at the expense of weaker theoretical guarantees \cite{Bregman_priors,pnp_ADMM_restore,proximal_BM3D,ryu2019plugandplay,Photon_Limited_Unrolling}. Moreover, to support more advanced inferences without significantly increasing computing effort, VI low-photon imaging methods approximate the Bayesian posterior distribution by a tractable surrogate model (e.g., a Gaussian model) useful for approximate inference. VI are often computationally efficient and deliver accurate point estimators, but they are by nature highly problem-specific \cite{Marnissi2017,Marnissi2016} and may be unreliable for uncertainty quantification because of local convergence issues and approximation errors \cite{survey}. EP methods can suffer from similar limitations, and their computation cost is higher, but they can provide more accurate posterior estimates. Conversely, modern MCMC methods provide a general Bayesian computation strategy to perform inference in imaging problems that can support a wide range of inferences and models, with detailed convergence guarantees, albeit at computational cost that is potentially significantly higher than VI and optimisation strategies. To the best of our knowledge, very few works consider MCMC methods for low-photon imaging problems. For Poisson imaging problems, the state of the art MCMC methods are the Poisson hierarchical Gibbs sampler (PHGS) given in  \cite{per_christian_paper}, the primal-dual preconditioned Crank-Nicolson Langevin MCMC algorithm \cite{zhou2019bayesian}, and the split-and-augmented (SPA) Gibbs sampler \cite{8683031}. For Bernoulli, binomial and geometric data, the state of the art are the Gibbs samplers described in \cite{7927467,bin_geo}. These MCMC methods are either highly model-specific, they do not account for the non-negativity constraint in practice or they do not benefit from important recent developments such as acceleration \cite{vargas2020accelerating}.

The aim of this paper is to propose a general computationally efficient MCMC methodology to perform Bayesian inference in low-photon imaging problems, with special attention to models that are log-concave. Similarly to modern MCMC methods for Gaussian imaging problems, the proposed strategy is based on approximations of the Langevin stochastic differential equation (SDE) \cite{cheng2018underdamped,durmus2016efficient,vargas2020accelerating}, which we modify in key ways to enable computation for low-photon imaging problems with non-negativity constraints. More precisely, in this paper we develop a new approach based on reflected Langevin SDEs, for which we develop convergence theory and a series of MCMC algorithms.

The remainder of this paper is organised as follows: Section \ref{sec:main} defines notation and introduces the class of problems and Bayesian models considered. Section \ref{section_3} recalls the Langevin SDE, underlines the several issues preventing their direct application to Bayesian low-photon imaging problems, and presents the proposed Reflected Langevin SDE as well as its theoretical properties. Following on from this, Section \ref{section4} then presents the proposed MCMC samplers and discusses connections with state of the art MCMC algorithms for models with Gaussian likelihoods \cite{durmus2016efficient,goldman2021gradientbased,pereyra2015proximal,vargas2020accelerating}. Section \ref{section_5} demonstrates the proposed approach on a range of challenging experiments related to image deblurring with Poisson data, denoising with binomial data, and inpainting with geometric data, including comparisons with MAP estimation by convex optimisation \cite{5492199} and the SPA sampler \cite{8683031}.


\section{Problem Statement}
\label{sec:main}

\subsection{Observation Models}\label{sec:observationmodels}

We consider imaging inverse problems where we aim to estimate an unknown image $x\in \mathbb{R}^{n}$ from an observation $y\in\mathbb{R}^{m}$ related to $x$ through a statistical model with likelihood function $p(y|x)$. In the problems we are interested in, the recovery of $x$ from $y$ is ill-posed or severely ill-conditioned\footnote{The estimation problem is said to be ill-posed or ill-conditioned when it does not admit a unique solution or it admits a solution that is not stable w.r.t. small perturbations in $y$.} and involves non-negativity constraints on $x$. The typical scenario is image recovery from Poisson distributed observations in imaging applications when the mean photon arrival rate is low. In this case, we acquire a set of discrete photon measurements $y=[y_{1},\ldots,y_{m}]\in \mathbb{N}_{0}^{m}=\{0,1,\ldots\}^{m}$ related to the ground truth image $x$ through the statistical model
\begin{equation}\label{pois_model}
    y|x \sim \mathcal{P}(Ax)\,\,,
\end{equation}
\noindent where $A\in\mathbb{R}_{+}^{m\times n}$, $\R_+^{m\times n}=\{x\in\R^{m\times n}: \min_{i}x_i\geq0\}$, is a linear operator which models the physical properties of the observation process. A common difficulty in applications is that $AA^{T}$ is often poorly conditioned or rank deficient. The respective negative log-likelihood is given by 
\begin{equation}\label{eq:poissonloglikelihood}
f_{\mathcal{P}}(x) = \sum_{i=1}^{m}\left[(A x)_{i}-y_{i}\log ((A x)_{i})+\log(y_{i}!) + \iota_{\mathbb{R}_{++}^{n}}(x)\right]\,\,,
\end{equation}
where $\iota_{\mathbb{R}_{++}^{n}}(\cdot)$ is the indicator function on $\Rint=\{x\in\R^n: \min_{i}x_i>0\}$ that requires $x$ to be positive. Because of the presence of $A$ and the logarithm, the Poisson log-likelihood is not quadratic and often non-separable. Also, $x \mapsto \nabla f_{\mathcal{P}}(x)$ is not necessarily globally Lipschitz continuous. As a result, the estimation of $x$ from $y$ is highly challenging.


In some single-photon imaging applications, the Poisson likelihood assumption does not hold and the relation between the unknown image and the observations is better modeled by binomial or geometric models \cite{bin_geo,first_photon_imaging,7932527,7150537}. For instance, if the detector dead time\footnote{A detector's dead time period corresponds to the period following a photon detection during which additional photons cannot be detected.} can't be neglected but can be reset a the end of a given period (referred to as repetition period here), then the measurements $y=[y_{1},\ldots,y_{m}]\in \mathbb{N}_{0}^{m}$ an be related to the ground image $x\in \mathbb{R}^{n}$ by the statistical model \cite{bin_geo,7932527,7150537}
\begin{equation}\label{bin_model}
    y|x,t \sim \mathcal{B}in\left(t,1-e^{Ax}\right)\,\,,
\end{equation}
\noindent where $\mathcal{B}in(\cdot,\cdot)$ stands for the product of independent binomial distributions, $t=[t_{1},\ldots,t_{n}]\in \mathbb{N}^{n}$ gathers the numbers of repetition periods for each detector, and $A$ is again a linear operator. Then negative log-likelihood is thus given by
\begin{equation}\label{eq:bin_likelihood}
f_{\mathcal{B}}(x) = \sum_{i=1}^{m}\left[-y_{i}\log (1-e^{-(Ax)_{i}})+(t_{i}-y_{i})(Ax)_{i}+ \iota_{\mathbb{R}_{++}^{n}}(x)\right]\,\,.
\end{equation}
In the first-photon imaging context \cite{first_photon_imaging} where the recordings are interrupted for each pixel after the first detection, the number of repetition periods required to record the first detection event in each pixel follows a geometric distribution specified (in vector form) by
\begin{equation}\label{geo_model}
    t|x,y=1 \sim \mathcal{G}eo\left(1-e^{-Ax}\right)\,\,,
\end{equation}
and the negative log-likelihood is given by
\begin{equation}\label{eq:geo_likelihodd}
f_{\mathcal{G}}(x) = \sum_{i=1}^{m}\left[
(t_{i}-1)(Ax)_{i}-
\log \left(1-e^{-(Ax)_{i}}\right)
+ \iota_{\mathbb{R}_{++}^{n}}(x)\right]\,\,\,\,.
\end{equation}
Note that in \eqref{geo_model}, the variables treated as observations are in $t$, while in \eqref{bin_model}, the observations are in $y$ and $t$ is only a fixed experimental parameter. In extreme photon-starved regimes, the binomial and geometric random noise processes are highly non-linear, have very poor signal-to-noise properties and involve log-likelihoods that are not Lipschitz continuous over the non-negative domain. These models differ from the more classical Poisson model but share similar challenges, which is why we consider them as exemplar observation models in this work (to illustrate that our general samplers are not likelihood-specific).

\subsection{Bayesian inference for imaging applications}\label{sec:Bayesforimaging}

The identifiability issues of the aforementioned likelihoods imply that additional information is needed to reduce the uncertainty about $x$ and deliver accurate well-posed solutions. Within the Bayesian framework, regularization is introduced via the prior distribution over $x$, denoted as $p(x)$. This distribution promotes expected properties about $x$ (e.g. sparsity, piece-wise regularity or smoothness). Inferences are then based on the posterior distribution, given via Bayes' theorem \cite{robert2007bayesian} by 

$$\pi(x) \triangleq p(x|y) = \dfrac{p(y|x)p(x)}{\int_{\mathbb{R}^{N}}p(y|x)p(x)dx}~~.$$

\noindent In this work, we focus on log-concave models of the form
\begin{equation}\label{post}
\pi(x) = \dfrac{e^{-f_{y}(x)-g(x)-\iota_{\mathbb{R}_{++}^{n}}(x)}}{\int_{\mathbb{R}^{d}}e^{-f_{y}(x)-g(x)-\iota_{\mathbb{R}_{++}^{n}}(x)}dx}~~,
\end{equation}
\noindent where $f_{y}:\mathbb{R}^{n} \rightarrow \mathbb{R}$ (corresponding to the data log-likelihood) and $g:\mathbb{R}^{n} \rightarrow (-\infty , +\infty]$ are two lower bounded functions and $\iota_{\mathbb{R}_{++}^{n}}(\cdot)$ is the indicator function on $\mathbb{R}_{++}^{n}$ which is convex and guarantees that the positivity constraint is satisfied. The two functions $f_{y}$ and $g$ satisfy the following conditions:

\begin{enumerate}
    \vspace{0.2cm}
    \item $f_{y}$ is convex and can be expressed as $f_{y}(x)=F(Ax)$ for some $F:\Rint\to \R$ which is Lipschitz continuous on the set $\{x\in \Rint: x_i>b \text{ for all } i\}$ for any $b>0$.
    \item $g$ is proper, convex and lower semi-continuous, but potentially non-smooth.
    \vspace{0.2cm}
\end{enumerate}

Calculating $\pi$ exactly is often not feasible in imaging because of dimensionality of $x$. Instead, summaries such as Bayesian estimators posterior probabilities or expectations are used to deliver some important information about $\pi$. In particular, for log-concave models as \eqref{post}, recent works often rely on the MAP estimator \cite{pereyra2016maximumaposteriori,survey}
\begin{equation}\label{MAP_eq}
\hat{x}_{MAP} = \argmax_{x} \pi(x) = \argmin_{x}\{f_{y}(x) + g(x) + \iota_{\mathbb{R}_{++}^{n}}(x)\}\,\,\,\,.
\end{equation}
\noindent which can efficiently computed, even in high dimensions, by using proximal convex optimization techniques \cite{chambolle_pock_2016,combettes2010proximal}. For instance, to solve the minimization problem of the form \eqref{MAP_eq} under a Poisson likelihood, an ADMM variant was considered in \cite{5492199} which does not assume Lipschitz continuity in the gradient while a quadratic approximation of the Poisson likelihood that turns the original problem into a constrained $l_{2}$ denoising problem was proposed in \cite{SPIRAL_paper}. 

Log-concavity also plays a central role in the calculation of moment-based estimators such as the minimum mean square error (MMSE) estimator since it guarantees the existence of all posterior moments. It also enables the efficient calculation of other advanced quantities related to uncertainty quantification, calibration of model parameters, and model selection in the absence of ground truth \cite{2018,durmus2016efficient,pereyra2016maximumaposteriori,vidal2020maximum}. More precisely, to calculate such estimators or perform more advanced analyses, it is necessary to use high-dimensional MCMC methods to simulate samples from the posterior $\pi$ followed by Monte Carlo integration \cite{green2015bayesian}. In recent years, highly efficient MCMC methods have been developed to perform Bayesian inference for imaging sciences based on techniques from proximal optimization. These methods are known as proximal MCMC methods \cite{cheng2018underdamped,durmus2016efficient,goldman2021gradientbased,pereyra2015proximal,vargas2020accelerating,8683031} and are useful for Bayesian imaging since they are easy to implement, have significantly reduced computational time (compared to traditional Gibbs and Metropolis-Hastings samplers) and offer theoretical convergence guarantees. However, they have been implemented for Gaussian imaging problems and cannot directly be applied to low-photon imaging problems that involve poor
regularity properties and non-negativity constraints. In this work, we will present how proximal MCMC methods can be adjusted to tackle the  challenging class of models \eqref{post}.

\section{Langevin SDEs for sampling}\label{section_3}

Before presenting the proposed reflected SDE we first recall how to use the Langevin SDE for sampling from distributions of the form
\begin{equation}\label{pi_U}
    \pi(x)=\frac{e^{-U(x)}}{\int_{\mathbb{R}^N} e^{-U(x)} dx}\,\,,
\end{equation}
for some function $U:\mathbb{R}^n\to\mathbb{R}$ with $e^{-U}$ integrable. If $\pi$ is given by \eqref{post} then $U(x)=f_{y}(x)+g(x)+\iota_{\mathbb{R}_{++}^n}(x)$. Before we discuss the difficulties in such a model, we first consider the case where $U$ is Lipschitz continuously differentiable on $\mathbb{R}^n$. As is shown in \cite{ma2019irreversible} the class of SDEs which target $\pi$ is given by
\begin{equation}\label{eq:generalSDE}
    \begin{aligned}
    dX_t&=[-(S(X_t)+J(X_t))\nabla_x U(X_t)+\Gamma(X_t)]dt+\sqrt{2S(X_t)}dW_t\,\,,\\
    \Gamma_i(x)&=\sum_{j=1}^n \partial_{x_j}(S_{ij}(x)+J_{ij}(x))\,\,,
    \end{aligned}
\end{equation}
for some  positive semidefinite diffusion matrix $S(x)$ and  skew-symmetric matrix $J(x)$. Here, $W_t$ is a $n$-dimensional Brownian motion. We concentrate on the overdamped Langevin SDE which corresponds to taking $J=0$ and $S(x)=I_n$, that is 
\begin{equation}\label{eq:Langevin}
    \begin{aligned}
    dX_t&=-\nabla_x U(X_t)dt+\sqrt{2}dW_t\,\,.
    \end{aligned}
\end{equation}
Another choice which is popular for sampling algorithms is the underdamped Langevin SDE which corresponds to taking $X_t=(q_t,p_t)$, $U(q,p)=U(q)+\lvert p\rvert^2/(2u)$, $$S=\left(\begin{array}{cc}0 & 0\\ 0& \gamma u\end{array}\right), \quad  J=\left(\begin{array}{cc}0 & -u\\ u& 0\end{array}\right),$$ for some constants $u,\gamma>0$, that is
\begin{equation}\label{eq:underdampedLangevin}
    \begin{aligned}
    dq_t&=p_tdt\\
    dp_t&=-u\nabla_x U(q_t)dt-\gamma p_tdt+\sqrt{2\gamma u}dW_t\,\,.
    \end{aligned}
\end{equation}

\subsection{Sampling in a domain}\label{sec:samplingdomain}

In the case where $U(x)=f_{y}(x)+g(x)+\iota_{\mathbb{R}_+^n}(x)$ we no longer have that the SDE \eqref{eq:Langevin} or even the more general form of SDE \eqref{eq:generalSDE} admits $\pi$ as an invariant measure. Indeed there are several issues: (i) the function $g$ does not need to be differentiable, so $\nabla_x g$ is not well-defined; (ii) $\pi$ is only positive on a subset of the domain but the SDE \eqref{eq:Langevin} is irreducible so any invariant measure of \eqref{eq:Langevin} admits a positive density on $\mathbb{R}^n$; (iii) the function $\nabla_x f$ is not Lipschitz continuous and therefore the SDE may not be well-posed. Let us discuss each of these issues in turn. 

\paragraph{$\nabla g$ is not well defined.} Since $g$ is not differentiable, we approximate $g$ by a function $g^\lambda$ which is obtained by applying the Moreau-Yosida (MY) envelope on g, that is,
\begin{equation}\label{MY_envelope}
    g^\lambda(x)=\min_{y\in \mathbb{R}^n}\{g(y)+\frac{1}{2\lambda} \lVert y-x\rVert^2\}
\end{equation}
where $g^{\lambda}\rightarrow g$ as $\lambda\rightarrow 0$. It is shown in \cite[Proposition 12.19]{rockafellar2009variational} that $g^\lambda$ is Lipschitz continuously differentiable with Lipschitz constant $\lambda^{-1}$ and gradient
\begin{equation}\label{MY_grad}
    \nabla g^\lambda(x)=\dfrac{1}{\lambda}\left(x-\prox_{g}^{\lambda}(x)\right)\,\,,
\end{equation}
\noindent where
$$\prox_{g}^{\lambda}(x) = \argmin_{y\in \mathbb{R}^n}\{g(y)+\frac{1}{2\lambda} \lVert y-x\rVert^2\}.$$
\noindent The use of \eqref{MY_grad} to approximate terms such as the function $g$ is investigated in \cite{durmus2016efficient}.

\paragraph{Positivity constraint.} Since the SDE \eqref{eq:Langevin} is non-degenerate, if it admits an invariant measure, then this measure must have a strictly positive density. Therefore we can not use the SDE \eqref{eq:Langevin} to sample from a measure which is supported on $\mathbb{R}_+^n$. In principle, by taking the matrix $S(x)$ to be degenerate one can construct an SDE which has an invariant measure supported on the manifold $\mathbb{R}_+^n$, however this introduces further complications such as not having a unique invariant measure (see \cite{CCDO} for a discussion of the invariant measures of degenerate SDEs). Therefore, we propose to introduce reflections on the boundary of $\mathbb{R}_+^n$ and construct a reflected SDE (RSDE) which targets an invariant measure on $\mathbb{R}_+^n$. 

\paragraph{$\nabla f$ not well-defined.} The function $\nabla_xf_{y}$ is not necessarily Lipschitz-continuous and is not necessarily finite on the boundary of $\mathbb{R}_+^n$. Thus, it is not clear that the SDE \eqref{eq:Langevin} is well-defined on $\mathbb{R}_+^n$ (even assuming at this stage that reflections are included in a suitable manner and that $g$ is replaced by $g^\lambda$). Even if the SDE was well-defined, these remain challenging for its numerical approximation since commonly used discrete time approximations need not be stable in the absence of a Lipschitz condition (see \cite{mattingly2002ergodicity}). To address this difficulty, we propose to introduce the approximation $f_{y}^b$ defined by $f_{y}^b(x)=F(Ax+b)$ of $f_{y}$. Note that by the assumptions of $F$, the function  $\nabla_xf_{y}^b$ is globally Lipschitz continuous. For these reasons, we propose to approximate $\pi$ by the measure $\pi^{\lambda,b}$ which is supported on $\mathbb{R}_+^n$ and defined for $x\in \R_+^n$ by
\begin{align}\label{eq:pilb}
    \pi^{\lambda,b}(x)=\frac{e^{-\Ulb(x)}}{\int_{\R_+^n} e^{-\Ulb(\tilde{x})} d\tilde{x}}, \quad 
    \Ulb(x)=f_{y}^b(x)+g^\lambda(x).
\end{align}
In the following lemma we show that $\pi^{\lambda,b}$ is an approximation of $\pi$ that has interesting asymptotic properties as $\lambda$ and $b$ vanish.
\begin{lemma}\label{lem:TVconvergence}
Suppose that the conditions of Section \ref{sec:Bayesforimaging} hold. Let $f_y^b$ and $g^\lambda$ be as above then with $\pi^{\lambda,b}$ given by \eqref{eq:pilb} and $\pi$ defined by \eqref{post}, we have
\begin{equation*}
    \lim_{\lambda\to 0, b\to 0}\lVert \pi-\pi^{\lambda,b}\rVert_{TV} =0.
\end{equation*}
\end{lemma}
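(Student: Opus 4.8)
The plan is to reduce the claim to an $L^1$ convergence statement for the \emph{unnormalised} densities and then to invoke a dominated-convergence argument. Write $\psi(x)=e^{-f_y(x)-g(x)}\mathbbm{1}_{\Rint}(x)$ and $\psi^{\lambda,b}(x)=e^{-\Ulb(x)}\mathbbm{1}_{\R_+^n}(x)$, with normalising constants $Z=\int\psi$ and $Z^{\lambda,b}=\int\psi^{\lambda,b}$, so that $\pi=\psi/Z$ and $\pi^{\lambda,b}=\psi^{\lambda,b}/Z^{\lambda,b}$. Adding and subtracting $\psi^{\lambda,b}/Z$ and using $|Z-Z^{\lambda,b}|\le\lVert\psi-\psi^{\lambda,b}\rVert_{1}$ yields the elementary bound
\begin{equation*}
\lVert \pi-\pi^{\lambda,b}\rVert_{TV}\le \frac{1}{Z}\,\lVert \psi-\psi^{\lambda,b}\rVert_{1},
\end{equation*}
so it suffices to prove $\lVert\psi-\psi^{\lambda,b}\rVert_{1}\to0$ as $\lambda,b\to0$.

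First I would establish pointwise convergence $\psi^{\lambda,b}\to\psi$ almost everywhere. Since $\partial\R_+^n$ is Lebesgue-null it is enough to treat $x\in\Rint$. By standard properties of the Moreau--Yosida envelope (see \cite[Proposition 12.19]{rockafellar2009variational}) one has $g^\lambda(x)\uparrow g(x)$ as $\lambda\downarrow0$, while the continuity of $F$ on $\{z:z_i>\beta\text{ for all }i\}$ for every $\beta>0$ gives $f_y^b(x)=F(Ax+b)\to F(Ax)=f_y(x)$ as $b\to0$ whenever $Ax\in\Rint$; on the remaining null set of $x$ for which some $(Ax)_i=0$ both $\psi(x)$ and $\lim \psi^{\lambda,b}(x)$ vanish because $F$ diverges there. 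Combining these two facts gives $\psi^{\lambda,b}(x)=e^{-f_y^b(x)-g^\lambda(x)}\to e^{-f_y(x)-g(x)}=\psi(x)$ for almost every $x$.

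To upgrade pointwise convergence to $L^1$ convergence I would apply the dominated convergence theorem, which requires an integrable majorant of $\psi^{\lambda,b}$ uniform over all sufficiently small $\lambda$ and $b$. The envelope is monotone in $\lambda$, so fixing $\lambda_0>0$ gives $g^\lambda\ge g^{\lambda_0}$ and hence $e^{-g^\lambda}\le e^{-g^{\lambda_0}}$ for all $\lambda\le\lambda_0$, which removes the $\lambda$-dependence. Controlling $e^{-f_y^b}$ uniformly over $b\in(0,b_0]$ is the delicate point: on any region where the coordinates of $Ax+b$ stay bounded away from zero, $F$ is Lipschitz with a fixed constant, so $|f_y^b(x)-f_y^{b_0}(x)|\le L\lVert b_0\rVert$ and the tail decay of the integrable reference density $e^{-f_y^{b_0}-g^{\lambda_0}}$ is transferred to $\psi^{\lambda,b}$ up to the constant factor $e^{L\lVert b_0\rVert}$; near the boundary, where the Lipschitz constant of $F$ degenerates, one instead uses $f_y^b\ge\inf F>-\infty$ together with the fact that $e^{-F}$ is correspondingly small there. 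Assembling these estimates produces an integrable majorant, and dominated convergence then yields $\lVert\psi-\psi^{\lambda,b}\rVert_{1}\to0$, and in particular $Z^{\lambda,b}\to Z$.

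The main obstacle is precisely the construction of this uniform majorant, because $\nabla f_y$ is not globally Lipschitz and $F$ may blow up on $\partial\R_+^m$, so the Lipschitz transfer used in the tails fails near the boundary; the resolution is to partition the domain and exploit the trade-off that wherever $F$ is ill-behaved the weight $e^{-F}$ is negligible. If one prefers to avoid an explicit majorant, the same conclusion follows from Scheff\'e's lemma once $Z^{\lambda,b}\to Z$ is known: the lower bound $\liminf_{\lambda,b\to0}Z^{\lambda,b}\ge Z$ is immediate from Fatou's lemma applied to the a.e.\ convergence above, so only the matching upper bound $\limsup_{\lambda,b\to0}Z^{\lambda,b}\le Z$ genuinely requires the domination just described.
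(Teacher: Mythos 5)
Your proposal is correct and takes essentially the same route as the paper's proof: pointwise convergence of the unnormalised densities, a dominated-convergence argument for the normalising constants (with Fatou giving the lower bound for free), and Scheff\'e's lemma, for which your elementary bound $\lVert \pi-\pi^{\lambda,b}\rVert_{TV}\le Z^{-1}\lVert \psi-\psi^{\lambda,b}\rVert_{1}$ is an equivalent substitute. You are in fact more explicit than the paper about the one delicate point, the integrable majorant uniform in small $(\lambda,b)$, which the paper dispatches in a single line by citing $g^{\lambda}\le g$, the lower bound on $f_{y}^{b}$ and the convergence properties of the Moreau--Yosida envelope.
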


\begin{proof}
    The proof is deferred to Appendix \ref{app:proofs_exp_ergodicity}.
\end{proof}

\subsection{Reflected SDE}\label{section_ref_SDE}

In Section \ref{sec:samplingdomain} we discussed the issue of sampling from the distribution $\pi$ given by \eqref{post} and introduced an approximation $\pi^{\lambda,b}$ defined on the domain $\R_+^n$. In this section we introduce a RSDE which has $\pi^{\lambda,b}$ as an invariant distribution and discuss its properties. In Section \ref{sec:expergodicity} we verify that this RSDE has the desired invariant distribution and converges exponentially.

We consider the following RSDE defined on the convex domain $\R_+^n\subseteq \R^n$ 
\begin{align}\label{eq:RSDE}
dX_t = -\nabla_x\Ulb(X_t)dt+\sqrt{2}dW_t + d\kappa_t.
\end{align}
here $W_t$ is a $n$-dimensional Brownian motion, $\kappa_t$ is local time which only increases on $\partial \R_+^n$, $\Ulb:\R_+^n\to\R$ and is $C^2$ with $\nabla_x\Ulb$ being globally Lipschitz continuous. Observe that this RSDE is non-negative as it takes values in $\R^n_+$.  We define the semigroup, $\cP_t$, corresponding to this RSDE by
\begin{equation*}
    \cP_t\varphi(x) = \mathbb{E}[\varphi(X_t^x)] \quad \text{ for all } \varphi\in B_b(\R_+^n), x\in \R_+^n, t\geq 0.
\end{equation*}
For background on semigroups and operators we refer the reader to \cite{Ethier}. We (formally) define the generator corresponding to this semigroup to be
\begin{equation*}
    \cL \varphi(x) = -\langle\nabla_x\Ulb(x), \nabla_x\varphi(x)\rangle +\Delta_x\varphi(x)
\end{equation*}
for $\varphi:\R_+^n\to \R$ sufficiently smooth. Under the above assumptions it is shown in \cite[Theorem 4.1]{TanakaRSDE} there is a (pathwise) unique strong solution to \eqref{eq:RSDE}. 
We emphasise that the domain does not need to be smooth provided that it is convex, in particular the domain $\R_+^n$ is permitted. Let us expand on what we mean by local time in this context, $\kappa:[0,\infty)\to \R^n$ is a c\'adl\'ag function with bounded variation, $\kappa_0=0$, the set $\{t\geq 0: X_t^i>0 \text{ for all } i\}$ has measure zero with respect to the measure $d\lvert \kappa\rvert$ and we can write
\begin{equation*}
    \kappa_t=\int_0^t\hat{n}(s)d\lvert \kappa_s\rvert
\end{equation*}
where the function $\hat{n}(s)$ is a unit normal vector at $X_s$ for almost all $s$ with respect to the measure $d\lvert \kappa\rvert$. Since the domain needs not to be smooth we shall explain what we mean by a normal vector. A normal vector $\hat{n}$ at $x$ is any unit vector which is orthogonal to some supporting hyperplane $H$ of $\R_+^n$. A supporting hyperplane of $\R_+^n$ is a hyperplane such that $\R_+^n$ is contained in one of the two closed half-spaces bounded by $H$ and $\R_+^n$ has at least one boundary point on the hyperplane. In this setting, the set of (inward) normal vectors at $0$ is the set
$$
\{\hat{n}=(\hat{n}_1,\ldots,\hat{n}_n): \lvert \hat{n}\rvert=1, n_i\geq 0 \, \text{ for all } i\}.
$$
If the set $\R_+^n$ had a smooth boundary and the coefficients are suitably smooth (see \cite{freidlin} for example), it is well known that the semigroup $\cP_t\varphi$ corresponding to $X_t$ solves a Neumann boundary value problem.
As the case we are interested in has a non-smooth boundary for $n>1$ more care is required. This domain is investigated in \cite{DEUSCHEL} under the assumption that the drift coefficient $\nabla_x\Ulb\in C^1(\R_+^n)$ is globally Lipschitz, and it is shown that the semigroup $\cP_t\varphi\in C^1(\R_+^n)$ for $\varphi\in C_b(\R_+^n)$ and satisfies the Neumann boundary conditions $\partial_{x_i}\cP_t\varphi(x)=0$ whenever $x_i=0$. Therefore the semigroup $u_t(x)=\cP_t\varphi$ solves the Neumann boundary value problem:
\begin{equation}\label{eq:NPDE}
    \left\{\begin{aligned}
    \partial_tu_t(x)&=\cL u_t(x),   && x\in \R_+^n,\\
    \frac{\partial u_t}{\partial x_i}(x)&=0,     && \text{ for any  } x\in \R_+^n \text{ with } x_i=0,\\
    u_0(x)&=\varphi(x), && x\in \R_+^n.
    \end{aligned}\right.
\end{equation}

In Proposition \ref{prop:smoothing} (see Appendix \ref{app:proofs_exp_ergodicity}) we apply the results of \cite{DEUSCHEL} to describe the smoothness of the semigroup, indeed for any $\varphi\in C_b(\R_+^n)$ we have that $\cP_t\varphi\in C^1(\R_+^n)$ and by ellipticity we have $\cP_t\varphi\in C^\infty(\Rint)$. 


\subsection{Exponential Ergodicity of RSDEs}\label{sec:expergodicity}

In order to show exponential ergocidity we shall apply \cite[Theorem 6.1]{Meyn}. Before giving the details of this theorem we provide an informal summary of the main assumption, the existence of a Lyapunov function. For our purposes, a Lyapunov function is a function $V:\R_+^n\to \R_+$ which is positive, $C^2$, $V(x)\to \infty$ as $x\to \infty$ and satisfies for some $C_1>0, C_2\in \R$
\begin{align}
    &\cL V\leq -C_1 V + C_2\label{eq:Lyapunovconditionsimple}\\
    &\frac{\partial V}{\partial x_i}(x)  =0 \quad \text{ whenever } x_i=0, \text{ for some } i\in \{1,\ldots, n\}.\label{eq:Lyapunovconditionboundary}
\end{align}
The first condition \eqref{eq:Lyapunovconditionsimple} is a sufficient condition to have that $\cP_tV$ is bounded in $t$ hence that the process is non-explosive and can not spend too much time in the tails. The second condition \eqref{eq:Lyapunovconditionboundary} ensures that the boundary condition arising from the non-negativity constraint is satisfied. This condition can be viewed as a necessary condition to have that $V$ is in the domain $D(L)$ of the generator. However as $V$ is unbounded it is not clear that $V$ belongs to the domain of the generator, indeed we do not even have that $\cP_t V$ is well-defined at this stage. For this reason we introduce a sequence of operators $\cL_\ell$ defined on functions from some bounded set $O_\ell$ to $\R$. We construct such functions by considering the process $X_{t\wedge T^\ell}$ where $T^\ell$ is the first exit time of $O_\ell$, i.e. the process is equal to $X_t$  up until the first exit time of $O_\ell$ and then remains constant. Set $\cL_\ell$ to the generator of this process. Since $O_\ell$ is bounded we have that $V\in C_b(O_\ell)$ and that $V$ belongs to the domain of $\cL_\ell$. We now give the more precise statement of this assumption, which is phrased in terms of the extended generator. 
We assume there is a sequence of open (as a subset of $R_+^n$) and bounded sets $O_\ell$ such that $O_\ell\subseteq O_{\ell+1}$ for all $\ell\geq 1$ and $\bigcup_{\ell=1}^\infty O_{\ell}=\R^n_+$. Let $T^\ell $ be the first exit time of $O_\ell$ and set $\cL_\ell$ to be the extended generator of the process $\{X_{t\wedge T^\ell}^x\}_{t\geq 0}$, that is a function $\varphi:\R_+^n\times \R_+\to \R$ is in the domain of $\cL_\ell$ if there exists a function $\psi:\R_+^n\times \R_+\to \R$ such that for each $x\in \R_+^n$, $t>0$
\begin{align*}
    \mathbb{E}[\varphi(X_{t\wedge T^\ell}^x,t)]=\varphi(x,0)+\mathbb{E}\left[\int_0^t \psi(X_{s\wedge T^\ell}^x,s)ds\right], \quad 
    \mathbb{E}\left[\int_0^t \lvert \psi(X_{s\wedge T^\ell}^x,s)\rvert ds\right]<\infty,
\end{align*}
and we write $\cL_\ell \varphi=\psi$. In our setting we can set $O_\ell=[0,\ell)^n$ and the domain of $\cL_\ell$ contains all $C^2$ functions with $\partial_{x_i}f(x)=0$ whenever $x_i=0$ by It\^o's formula. Now we shall state the conditions we require to apply \cite[Theorem 6.1]{Meyn} which is \cite[(CD3)]{Meyn} restated in our notation.

\begin{hypothesis}\label{hyp:Lyapunov}
There exists a function $V:\R_+^n\to \R_+$ which is positive, measureable, $V(x)\to \infty$ as $x\to\infty$ and for some $C_1>0, C_2<\infty$
\begin{equation}\label{eq:Lyapunovcondition}
    \cL_\ell V(x) \leq -C_1 V(x) +C_2, \quad x\in O_\ell.
\end{equation}
\end{hypothesis}

\begin{theorem} (\cite[Theorem 6.1]{Meyn}) \label{thm:MeynTweedie}
    Suppose that $X_t$ is a right process, and that all compact sets are petite (see \cite{Meyn} for definitions). If Hypothesis \ref{hyp:Lyapunov} holds, then $X_t^x$ admits a unique invariant measure $\pi$ and there exists $\beta<1$ and $B<\infty$ such that
    \begin{equation*}
        \sup_{\lvert \varphi\rvert \leq 1+V}\lvert \cP_t\varphi(x) -\pi(\varphi)\rvert \leq B(1+V(x))\beta^t, \quad t\geq 0, x\in \R_+^n.
    \end{equation*}
\end{theorem}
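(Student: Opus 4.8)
Since the statement is, as the parenthetical citation indicates, a restatement of \cite[Theorem 6.1]{Meyn} in the present notation, the plan is to confirm that our three hypotheses map onto the Foster--Lyapunov criterion (CD3) used there, and to indicate the mechanism that produces the conclusion. The right-process assumption supplies the strong Markov property and the measurability needed for the general theory, petiteness of compact sets plays the role of a minorization (small-set) condition, and Hypothesis \ref{hyp:Lyapunov}, phrased through the localized extended generators $\cL_\ell$, provides the geometric drift.

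First I would extract a moment bound from \eqref{eq:Lyapunovcondition}. Writing $m_\ell(t):=\mathbb{E}[V(X_{t\wedge T^\ell}^x)]$ and applying the defining identity of the extended generator $\cL_\ell$ to $V$ on $O_\ell$ gives
$$m_\ell(t) \leq V(x) + \int_0^t\bigl(-C_1 m_\ell(s)+C_2\bigr)\,ds,$$
so a Gronwall estimate yields $m_\ell(t)\leq e^{-C_1 t}V(x)+C_2/C_1$ uniformly in $\ell$. Since $\bigcup_\ell O_\ell=\R_+^n$, this uniform bound forces $T^\ell\to\infty$ almost surely (non-explosion), and Fatou then upgrades it to $\cP_t V(x)\leq e^{-C_1 t}V(x)+C_2/C_1$; in particular $\cP_t V$ is well defined and bounded in $t$, which is the content behind the first informal condition \eqref{eq:Lyapunovconditionsimple}.

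Next I would recast the drift in the standard petite-set form. Because $V(x)\to\infty$ as $x\to\infty$, each sublevel set $\{x\in\R_+^n:V(x)\leq R\}$ is bounded, hence contained in a compact set and therefore petite by assumption; and on its complement, for $R>2C_2/C_1$, one has $-C_1 V+C_2\leq -\tfrac{C_1}{2}V$. Thus \eqref{eq:Lyapunovcondition} is exactly a geometric drift toward a petite set, the hypothesis of \cite[Theorem 6.1]{Meyn}. From here the cited theory runs in two stages: the drift toward a petite set together with the finite-moment bound gives positive Harris recurrence, hence a unique invariant measure $\pi$; and the minorization provided by petiteness, combined with the geometric drift, is promoted by a coupling/regeneration argument to $V$-uniform (geometric) ergodicity, yielding constants $\beta<1$ and $B<\infty$ and the stated bound in the $(1+V)$-weighted total-variation norm.

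The main obstacle, and the reason for the localization apparatus introduced before the hypothesis, is that $V$ is unbounded and need not belong to the domain of the true generator $\cL$, so $\cP_t V$ is not a priori finite and the drift inequality cannot be integrated directly. This is precisely why the argument must be routed through the stopped processes $X_{t\wedge T^\ell}$ and the generators $\cL_\ell$, with the monotone exhaustion $\bigcup_\ell O_\ell=\R_+^n$ and the uniform-in-$\ell$ moment bound used to pass to the limit and transfer the estimate to the genuine semigroup. Once the drift is in petite-set form, the remaining steps—positive Harris recurrence and the coupling argument behind geometric convergence—are standard and are supplied by \cite{Meyn}.
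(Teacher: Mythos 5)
The paper offers no proof of this statement at all---it is quoted verbatim from \cite[Theorem 6.1]{Meyn}---and your proposal does the right thing given that: it correctly verifies that Hypothesis \ref{hyp:Lyapunov} is exactly the Foster--Lyapunov criterion (CD3), that sublevel sets of $V$ are petite via compactness, and that the localization through $\cL_\ell$ is what makes the drift condition meaningful for unbounded $V$, so your treatment is essentially the same (purely citational) one the paper takes, supplemented by a sound sketch of the underlying machinery. One minor technical caveat worth noting: your Gronwall inequality $m_\ell(t)\leq V(x)+\int_0^t(-C_1 m_\ell(s)+C_2)\,ds$ does not literally follow, since paths stopped at $T^\ell$ contribute zero (not $-C_1V+C_2\leq 0$) to the integrand while still contributing $V(X_{T^\ell})$ to $m_\ell(s)$; the standard repair, used in the cited source, is to apply Dynkin's formula to $e^{C_1(t\wedge T^\ell)}V(X_{t\wedge T^\ell})$ and use $V\geq 0$ together with $\inf_{O_\ell^c}V\to\infty$ to obtain both non-explosion and the bound $\cP_tV(x)\leq e^{-C_1t}V(x)+C_2/C_1$ by Fatou.
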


We can now apply this theorem for the RSDE \eqref{eq:RSDE}.

\begin{theorem}\label{thm:expergodicityconvex}
Let $\cP_t$ be the semigroup corresponding to the RSDE \eqref{eq:RSDE}. Assume that $\nabla_x\Ulb$ is continuously differentiable and globally Lipschitz. Suppose that $\Ulb$ is convex and $e^{-\Ulb(x)}$ is integrable over $\R_+^n$ then $X_t^x$ admits a unique invariant measure $\pi$ and there exist $V:\R_+^n\to [0,\infty)$, $\beta<1$ and $B<\infty$ such that
\begin{equation*}
    \sup_{\lvert \varphi\rvert \leq 1+V(x)}\lvert \cP_t\varphi(x) -\pi(\varphi)\rvert \leq B(1+V(x))\beta^t, \quad t\geq 0, x\in \R_+^n.
\end{equation*}
Moreover, the invariant measure $\pi$ is given by \eqref{eq:pilb}  for any $x\in \R_+^n$.
\end{theorem}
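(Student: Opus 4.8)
The plan is to apply the Meyn--Tweedie criterion, Theorem \ref{thm:MeynTweedie}, whose hypotheses split into three ingredients: that $X_t$ is a right process, that compact subsets of $\R_+^n$ are petite, and that the drift condition of Hypothesis \ref{hyp:Lyapunov} holds. Once these are in place the theorem delivers a unique invariant measure together with the stated geometric convergence, and it only remains to identify that measure with $\pi^{\lambda,b}$ of \eqref{eq:pilb}.

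First I would dispose of the two structural hypotheses. The right-process property follows from the pathwise-unique strong solution of \eqref{eq:RSDE} furnished by \cite[Theorem 4.1]{TanakaRSDE} together with the Feller continuity of $\cP_t$. For petiteness, I would invoke the smoothing result already quoted from \cite{DEUSCHEL} (see Proposition \ref{prop:smoothing}): since $\cP_t$ maps $C_b(\R_+^n)$ into $C^1(\R_+^n)$ and, by ellipticity, into $C^\infty(\Rint)$, the semigroup is strong Feller; combined with irreducibility---which is immediate because the non-degenerate Brownian component lets the process reach any open set from any starting point in arbitrarily short time---this shows that every compact set is small, hence petite.

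The core of the argument is the construction of a Lyapunov function $V$ meeting both \eqref{eq:Lyapunovcondition} and the boundary condition \eqref{eq:Lyapunovconditionboundary}. A natural candidate respecting the reflecting geometry of the orthant is $V(x)=\prod_{i=1}^n\cosh(\gamma x_i)$ for a small $\gamma>0$: because each factor is even in $x_i$, one has $\partial_{x_i}V(x)=0$ whenever $x_i=0$, so $V$ satisfies the Neumann condition automatically and lies in the domain of every $\cL_\ell$. A direct computation gives $\Delta_x V=n\gamma^2 V$ and $\nabla_x V=\gamma V\,\tau(x)$ with $\tau(x)=(\tanh(\gamma x_1),\dots,\tanh(\gamma x_n))$, whence
\begin{equation*}
\cL V(x)=\gamma V(x)\bigl[n\gamma-\langle\nabla_x\Ulb(x),\tau(x)\rangle\bigr].
\end{equation*}
To obtain \eqref{eq:Lyapunovcondition} it then suffices to show that $\langle\nabla_x\Ulb(x),\tau(x)\rangle$ is bounded below by a constant strictly larger than $n\gamma$ outside a compact set. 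This is where the hypotheses enter decisively: convexity of $\Ulb$ together with integrability of $e^{-\Ulb}$ over $\R_+^n$ forces $\Ulb$ to be coercive on the whole orthant and hence to grow at least linearly in every admissible direction, and I would use the gradient inequality $\langle\nabla_x\Ulb(x),x-x^\star\rangle\geq \Ulb(x)-\Ulb(x^\star)$ (with $x^\star$ a minimiser) together with $\tanh(\gamma x_i)\to 1$ as $x_i\to\infty$ to convert this growth into the required lower bound, taking $\gamma$ small. I expect this verification to be the main obstacle, precisely because integrability must be used to exclude directions along which the potential does not grow while $V$ does (mere convexity and a coercive-looking estimate are not enough); the Lipschitz bound $0\le\Delta\Ulb\le nL$ keeps the second-order term under control throughout.

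Finally, with Hypothesis \ref{hyp:Lyapunov} verified, Theorem \ref{thm:MeynTweedie} yields a unique invariant measure and the geometric bound. To identify it with $\pi^{\lambda,b}$, I would check stationarity directly: for any test function $\varphi$ in the domain of the generator (hence satisfying $\partial_{x_i}\varphi=0$ on $\{x_i=0\}$), integration by parts against the density $e^{-\Ulb}$ gives $\int_{\R_+^n}\Delta_x\varphi\,e^{-\Ulb}\,dx=\int_{\R_+^n}\langle\nabla_x\varphi,\nabla_x\Ulb\rangle e^{-\Ulb}\,dx$, the boundary terms vanishing because the outward normal derivative of $\varphi$ is zero on $\partial\R_+^n$. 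Substituting into $\cL\varphi$, the two first-order contributions cancel, so $\int_{\R_+^n}\cL\varphi\,e^{-\Ulb}\,dx=0$; thus $e^{-\Ulb}/\!\int_{\R_+^n} e^{-\Ulb}$ is invariant, and by the uniqueness just established it must equal $\pi$, i.e.\ $\pi=\pi^{\lambda,b}$.
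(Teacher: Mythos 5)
Your overall architecture---Theorem \ref{thm:MeynTweedie} with its three ingredients, petiteness of compacts from the smoothing estimate, and identification of the invariant measure by integrating $\cL\varphi$ against $e^{-\Ulb}$ over a core with the Neumann condition killing the boundary terms---is exactly the paper's, and your final integration-by-parts step is essentially verbatim the paper's computation. Your petiteness step is looser than the paper's: ``irreducibility is immediate'' is not obvious for a reflected SDE on a non-smooth orthant, and the paper makes it rigorous via a Girsanov change of measure reducing to reflected Brownian motion (whose law is equivalent to Lebesgue measure by \cite[Lemma 7.9]{Harrison}), together with an Arzel\`a--Ascoli argument based on Proposition \ref{prop:smoothing} to get continuity of $x\mapsto\mathbb{P}(X_t^x\in K)$. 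That is missing detail, not a wrong idea.

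The genuine gap is your Lyapunov function. The bound your computation requires---$\langle\nabla_x\Ulb(x),\tau(x)\rangle\geq c>n\gamma$ outside a compact set, with $\tau_i(x)=\tanh(\gamma x_i)$---is \emph{false} under the theorem's hypotheses. Take $n=2$, $k>1$, $\epsilon>0$ and
\begin{equation*}
\Ulb(x)=\tfrac12(x_2-kx_1)^2+\epsilon(x_1+x_2),
\end{equation*}
which is convex, has a globally Lipschitz $C^1$ gradient, and has $e^{-\Ulb}$ integrable over $\R_+^2$ (Gaussian decay transverse to the valley $x_2=kx_1$, exponential decay along it). At $x=(a,ka+s)$ one has $\nabla_x\Ulb(x)=(-ks+\epsilon,\,s+\epsilon)$, and as $a,s\to\infty$ the weights saturate, $\tau(x)\to(1,1)$, so $\langle\nabla_x\Ulb(x),\tau(x)\rangle\to-(k-1)s+2\epsilon\to-\infty$: the componentwise $\tanh$ weights discard the cancellation between the large negative $\partial_{x_1}\Ulb$ and the fact that $x_1$ is comparatively small. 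Hence $\cL V/V=\gamma\bigl(n\gamma-\langle\nabla_x\Ulb,\tau\rangle\bigr)\to+\infty$ along an unbounded set, and no constants can satisfy \eqref{eq:Lyapunovcondition}; your gradient inequality $\langle\nabla_x\Ulb(x),x-x^\star\rangle\geq\Ulb(x)-\Ulb(x^\star)$ cannot repair this, because $\tau(x)$ is not aligned with $x-x^\star$, and indeed $\Ulb(x)-\Ulb(x-\tau(x))\approx-(k-1)s$ at these points. What convexity plus integrability actually controls is only the \emph{radial} derivative: by Lemma \ref{lem:convexbound}, $\langle x,\nabla_x\Ulb(x)\rangle\geq\alpha\lvert x\rvert$ for $\lvert x\rvert\geq R$ (in the example, $\langle x,\nabla_x\Ulb(x)\rangle=s^2+\epsilon\bigl((1+k)a+s\bigr)>0$, the weight $a$ restoring the cancellation). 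The paper therefore uses the radial Lyapunov function $V(x)=\exp\bigl(\gamma(1+\lVert x\rVert^2)^{1/2}\bigr)$, whose gradient points along $x$, so the drift term is precisely the controlled quantity; moreover $\partial_{x_i}V(x)=\gamma x_iV(x)/(1+\lVert x\rVert^2)^{1/2}$ vanishes when $x_i=0$, so the Neumann boundary condition that motivated your $\prod_i\cosh(\gamma x_i)$ is already satisfied, and choosing $\gamma=\alpha/4$ with $R$ large yields $\cL_\ell V\leq-\tfrac{\alpha^2}{16}V$ outside a ball, completing Hypothesis \ref{hyp:Lyapunov}.
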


\begin{proof}
    The proof is deferred to Appendix \ref{app:proofs_exp_ergodicity}.
\end{proof}
\begin{remark}
Since all the models considered in Section \ref{sec:observationmodels} correspond to convex negative log-likelihoods, by imposing a log concave prior which is integrable we have that the assumptions of Theorem \ref{thm:expergodicityconvex} are satisfied and hence the corresponding RSDE is exponentially ergodic in each of these cases. Note that the proof of the above theorem does not explicitly require that $\Ulb$ is convex but only that there exist $\alpha>0$ and $R>0$ such that for $\lvert x\rvert \geq R$,
\begin{equation}\label{eq:convexlowerbound}
    \langle x, \nabla \Ulb(x)\rangle \geq \alpha \lvert x\rvert.
\end{equation}
It is shown in \cite[Lemma 2.2]{Bakrysimpleproof} that any convex $C^1$-function $\Ulb$ which is integrable must satisfy \eqref{eq:convexlowerbound}, however as \eqref{eq:convexlowerbound} only needs to hold outside a compact set it is clear that there are also non-convex functions $\Ulb$ for which Theorem \ref{thm:expergodicityconvex} holds.
\end{remark}

\subsection{Discrete time approximations of RSDEs} \label{discrete_rsde}

In Section \ref{section_ref_SDE} we introduced an RSDE which is exponentially ergodic and has the measure $\pi$ as an invariant distribution however this leads to the question of how do we approximate this process. In this section we discuss different approaches to approximate the local time term in the RSDE. There are three approximations we will discuss: (i) Penalty scheme; (ii) Projection scheme; (iii) Reflection scheme. Note there are more complicated schemes such as the half-plane approximation (see \cite{Gobet2001} for details) however, as this is more expensive and leads to the same order of accuracy as the reflection scheme, we do not consider it. 

The penalty scheme replaces the local time term in \eqref{eq:RSDE} by $\beta_\varepsilon(X_t)dt$ where $    \beta_\varepsilon(x) = (x^+-x)/\varepsilon$
and $x^+$ is the vector whose $i$-component is $\max(x_i,0)$. Then for each fixed $\varepsilon$ we have an SDE defined on the whole space $\R^n$. An alternative way to view this scheme is to consider the measure $\pi$ given by \eqref{post} and apply the proximal operator to the indicator function $\iota_{\R^n_{++}}$ in the same way as we approximate $g$. This results in a measure $\pi^{\lambda,b,\varepsilon}$ supported on $\R^n$ which can then be viewed as the invariant measure of the Langevin SDE
\begin{equation*}
    dX_t=-\nabla_xf^b(X_t)dt-\nabla_xg^\lambda(X_t)dt+\beta_\varepsilon(X_t)dt+\sqrt{2}dW_t.
\end{equation*}
This SDE can then be approximated by standard numerical schemes for SDEs such as the Euler-Maruyama scheme. In order to have a scheme which converges to the RSDE as the time step tends to zero we need to let $\varepsilon$ be a function of the time step $h$. A potential issue with this approach is that it allows for $X_t$ to be negative which can lead to numerical instabilities. For a further discussion about the convergence of this scheme in terms of mean square error, the reader is invited to consult \cite{petterson}.

In contrast to the penalty scheme, the projection scheme ensures that the samples are always in $\R^n_+$ by applying a projection at the end of each step. This scheme is defined by 
\begin{align*}
Y_{t_{k+1}}&=\bar{X}_{t_{k}}-\nabla_xU(X_{t_k}) h +\sqrt{2h}\xi_k, \\
\bar{X}_{t_{k+1}}^i &= (Y_{t_{k+1}}^i)^+ \quad \text{ for all } i\in \{1,\ldots, n\}.
\end{align*}
Here $\pi_D^\gamma(x) = \pi_{\partial D}^\gamma(x)+(F(x))^+\gamma(\pi_{\partial D}^\gamma(x))$, $t_k=kh$, and $\xi_k$ are i.i.d random variables with mean zero, covariance given by the identity matrix and finite third moments. It has been shown in \cite{Costantini} that the weak error converges with order $h^{\frac{1}{2}-\epsilon}$ for any $\epsilon>0$ and moreover that the rate $h^{\frac{1}{2}}$ is a lower bound for reflecting Brownian motion in an interval. In the case where $\xi_k$ is bounded, the weak error converges with order $h^{\frac{1}{2}}$ \cite[Theorem 3.4]{Costantini}.

The reflected Euler scheme is similar to the projected Euler scheme but the process is in the interior of the domain with probability one, i.e. $\bar{X}_{t_k}^{i}\neq 0$ for any $i,k$ with probability $1$. This scheme is defined by 
\begin{align*}
Y_{t_{k+1}}&=\bar{X}_{t_{k}}-\nabla_x\Ulb(X_{t_k}) h +\sqrt{2h}\xi_k, \\
\bar{X}_{t_{k+1}} &= \lvert Y_{t_{k+1}}\rvert \quad \text{ for all } i\in \{1,\ldots, n\}.
\end{align*}
Here $\lvert \cdot \rvert$ is understood to be applied componentwise, i.e. for $x\in \R^n$ we define the vector $\lvert x\rvert$ by $(\lvert x\rvert)_i=\lvert x_i\rvert$. It is shown in \cite[Theorem 1]{bossy_gobet_talay_2004} that this scheme has weak error with order $h$, comparing favourably with the projection scheme. 

From the options that we have discussed above, we choose to adopt the reflected scheme as a way of incorporating the non-negativity constraint without increasing the computational complexity of the methods or introducing significant additional bias. The resulting proximal Markov chain Monte Carlo algorithms are presented in Section \ref{section4}. From an algorithmic viewpoint, these MCMC algorithms are closely related to their non-reflected counterparts, with the only minor change being the careful introduction of a component-wise absolute-value operation. However, from a Bayesian computation methodology perspective, this minor change has the profound effect of allowing us to approximate the reflected SDE, which we have shown to be well-posed and to converge exponentially fast to the desired target density.

\subsection{Related work}

Let us mention some related works on the use of RSDEs for sampling. In the case of a smooth bounded domain, convergence of RSDEs to their invariant distribution and numerical approximations as studied in \cite{cattiaux2017invariant,leimkuhler2020simplest}. In both of these papers the invariant distribution is unknown (as they work with general RSDEs) but can be shown to exist and be unique since the domain is bounded and the noise is non-degenerate. Convergence of the numerical scheme is then investigated by using suitable expansions and PDE estimates. In the case of convex bounded domains the papers \cite{bubeck2018sampling,lamperski2021projected} establishes exponential convergence of the RSDE to the invariant distribution and convergence of the projected Euler scheme. Reflected Langevin dynamics have also been used in the context of optimisation, see \cite{sato2022convergence}, in the case of a bounded smooth domain. Finally we mention that reflections can also be used with Hamiltonian dynamics \cite{chalkis2021truncated,dobson2020reversible}, although we note here that a Metropolis-Hastings step has been included to ensure the algorithm has the desired invariant distribution. Since all of these algorithms are for bounded domains they are not applicable to our setting and extra care is required as the domain is not smooth.

\section{Reflected MCMC methods}\label{section4}

In general, it is not possible to solve \eqref{eq:RSDE}, and discrete approximations of the overdamped and underdamped Langevin dynamics (Section \ref{section_3}) need to be considered instead. In this Section, numerical schemes will be described that aim to solve \eqref{eq:RSDE} in a high dimensional setting. These schemes are inspired by the Reflected Euler scheme provided in Section \ref{discrete_rsde} and adjust current proximal MCMC methodology \cite{cheng2018underdamped,durmus2016efficient,laumont2021bayesian,vargas2020accelerating} that has proven to be scalable and efficient under Gaussian noise but cannot guarantee that the non-negative constraint is satisfied under Poisson, binomial or geometric noise processes.

\subsection{Reflected MYULA (PMALA)}\label{sec:RMYULA}

In the absence of the non-negativity constraint requirement, a simple way to discretize the SDE in (\ref{eq:Langevin}) is to follow the Euler-Maruyama (EM) scheme leading to the known Unadjusted Langevin Algorithm (ULA) \cite{roberts1996exponential}. Under a non-differentiable $U = -\log\pi$, the authors in \cite{pereyra2015proximal} approximated the non-smooth term $g$ of $U$ by its MY envelope $g^{\lambda}$ given in \eqref{MY_envelope}. By using \eqref{MY_grad}, ULA could be applied on (\ref{eq:Langevin}) leading to the known Moreau-Yosida Unadjusted Langevin Algorithm (MYULA). Taking into consideration the non-negativity constraint requirement, we propose a reflected version of MYULA, namely Reflected MYULA (R-MYULA), where each sample is simply reflected by taking its absolute value (see Section \ref{discrete_rsde} ). The algorithm is presented in \eqref{alg:r_MYULA}. If necessary, the asymptotic bias introduced by the discretization of the RSDE in \eqref{eq:RSDE} and the approximation $\pi^{\lambda,\beta}$ of $\pi$ can be corrected by complementing the R-MYULA with a Metropolis-Hastings (MH) step \cite{durmus2016efficient,pereyra2015proximal,roberts1996exponential} leading to the Reflected Proximal Metropolis Adjusted Langevin Algorithm (R-PMALA). 

A main computational drawback of R-MYULA and R-PMALA is the fact that in order to converge one needs to choose $\delta \leq 1/L$ where $L=L_{f^{b}_{y}}+1/\lambda$ is the Lipschitz constant of $\nabla \log\pi^{\lambda,b}$ and $L_{f^{b}_{y}}$ is the Lipschitz constant of $f^{b}_{y}$ constituting a function of $b$. The parameters $\lambda$ and $b$ define a trade-off between the computational efficiency and accuracy since letting $\lambda\rightarrow 0$ and $b\rightarrow 0$ brings $\pi^{\lambda,b}$ close to $\pi$, which reduces the asymptotic bias, but leads the discretization time-step to be diminished and consequently reduces the chain efficiency \cite{durmus2016efficient}. Thus, in cases where the likelihood is very informative ($L_{f^{b}_{y}}$ is large) or a high level of accuracy in the approximated $g^{\lambda}$ is required, the stepsize will be small leading to a highly correlated Markov chain and an MCMC method that explores very slowly the target distribution $\pi^{\lambda,b}$.

\begin{algorithm}[h]
\caption{Reflected MYULA}
\label{alg:r_MYULA}
\begin{algorithmic}
\STATE{\textbf{Set} $X_{0}\in \mathbb{R}^{n}_{++}$, $b>0$, $\lambda>0$, $\delta\in(0,\lambda/(\lambda L_{f_{y}^{b}}+1)]$, $N\in \mathbb{N}$}
    \FOR{$k = 0:(N-1)$}
\STATE{$Z_{k+1} \sim \mathcal{N}(0,\mathbb{I}_{n})$}
\vspace{0.1cm}
\STATE{$Y_{k+1} = \left(1-\frac{\delta}{\lambda}\right)X_{k}-\delta\nabla f^{b}_{y}(X_{k}) + \frac{\delta}{\lambda}\prox_{g}^{\lambda}(X_{k}) +\sqrt{2\delta}Z_{k+1}$ (MYULA)}
\vspace{0.1cm}
\STATE{$X_{k+1} = |Y_{k+1}|$}
\ENDFOR
    \RETURN $\{X_{k}: k\in\{1,\ldots, N\}\}$
\end{algorithmic}
\end{algorithm}

\begin{algorithm}[h!]
\caption{Reflected SKROCK}
\label{alg:r_SKROCK}
\begin{algorithmic}
\STATE{\textbf{Set} $X_{0}\in \mathbb{R}^{n}_{++}$, $b>0$, $\lambda>0$, $N\in \mathbb{N}$, $s\in\{3,\ldots,15\}$, $\eta>0$}
\STATE{\textbf{Compute} $l_{s}=(s-0.5)^{2}(2-4/3\eta)-1.5\vspace{0.2cm}$}
\STATE{\textbf{Compute} $\omega_{0} = 1+\dfrac{\eta}{s^{2}},\hspace{0.35cm} \omega_{1} = \dfrac{T_{s}(\omega_{0})}{T^{'}_{s}(\omega_{0})},\hspace{0.35cm} \mu_{1}  = \dfrac{\omega_{1}}{\omega_{0}},\hspace{0.35cm} \nu_{1}=s\omega_{1}/2,\hspace{0.35cm} k_{1} = s\omega_{1}/\omega_{0} \vspace{0.2cm}$}
\STATE{\textbf{Choose} $\delta\in(0,\delta_{s}^{max}]$, where $\delta_{s}^{max}= l_{s}/(L_{f_{y}^{b}}+1/\lambda)$}
\FOR{$k = 0:(N-1)$}
\STATE{$Z_{k+1} \sim \mathcal{N}(0,\mathbb{I}_{n})$}
\vspace{0.1cm}
\STATE{$K_{0} = X_{k}$}
\STATE{$W_{1} = X_{k} + \nu_{1}\sqrt{2\delta}Z_{k+1}$}
\STATE{$Y_{1} = X_{k} - \mu_{1}\delta\nabla f^{b}_{y}(W_{1}) -  \frac{\mu_{1}\delta}{\lambda}(W_{1}-\prox_{g}^{\lambda}(W_{1}))+ k_{1}\sqrt{2\delta}Z_{k+1}$}
\STATE{$K_{1} = |Y_{1}|$}
\vspace{0.1cm}
\FOR{$j = 2:s$}
\STATE{\textbf{Compute}}
$\mu_{j} = \dfrac{2\omega_{1}T_{j-1}(\omega_{0})}{T_{j}(\omega_{0})},\hspace{0.35cm} \nu_{j}=\dfrac{2\omega_{0}T_{j-1}(\omega_{0})}{T_{j}(\omega_{0})},\hspace{0.35cm} k_{j} = -\dfrac{T_{j-2}(\omega_{0})}{T_{j}(\omega_{0})}= 1-\nu_{j}\vspace{0.2cm}$
\STATE{$Y_{j} = -\mu_{j}\delta\nabla f^{b}_{y}(K_{j-1}) - \frac{\mu_{j}\delta}{\lambda}(K_{j-1}-\prox_{g}^{\lambda}(K_{j-1})) + \nu_{j}K_{j-1}+k_{j}K_{j-2}$}
\STATE{$K_{j} = |Y_{j}|$}
\ENDFOR
\STATE{$X_{k+1} = K_{s}$}
\ENDFOR
\RETURN $\{X_{k}: k\in\{1,\ldots,N\}\}$
\end{algorithmic}
\end{algorithm}

\begin{algorithm}[h!]
\caption{Reflected MYUULA}
\label{alg:r_MYUULA}
\begin{algorithmic}
\STATE{\textbf{Set} $X_{0}\in \mathbb{R}^{n}_{++}$, $V_{0}\in \mathbb{R}^{n}$, $b>0$, $\lambda>0$, $\gamma>0$, $u = \lambda/(\lambda L_{f^{b}_{y}}+1)$, $\delta = O(1)$, $N\in \mathbb{N}$}
\vspace{0.2cm}
\STATE{\textbf{Set} 
$\Sigma = 
\begin{pmatrix}
u\left(\delta-\frac{1}{\gamma^{2}}e^{-2\gamma\delta}-\frac{3}{\gamma^{2}}+e^{-\gamma\delta}\right)\mathbb{I}_{n} & \frac{u}{\gamma}(1+e^{-2\gamma\delta}-2e^{-2\gamma\delta})\mathbb{I}_{n} \\
\frac{u}{\gamma}(1+e^{-2\gamma\delta}-2e^{-2\gamma\delta})\mathbb{I}_{n} & u(1-e^{-2\gamma\delta})\mathbb{I}_{n}
\end{pmatrix}
$
}
\vspace{0.2cm}
\FOR{$k = 0:(N-1)$}
\vspace{0.1cm}
\STATE{$(Z^{1}_{k+1}, Z^{2}_{k+1}) \sim \mathcal{N}(0,\Sigma)$}
\vspace{0.1cm}
\STATE{$V_{k+1} = V_{k}e^{-\gamma\delta}-\dfrac{u}{\gamma}(1-e^{-\gamma\delta})(-\nabla f^{b}_{y}(X_{k})-\frac{1}{\lambda}(X_{k}-\prox_{g}^{\lambda}(X_{k})))$}
\STATE{$Y_{k+1} = X_{k} + \dfrac{1}{\gamma}(1-e^{-\gamma\delta})V_{k} - \dfrac{u}{\gamma}\left(\delta - \dfrac{1}{\gamma}(1-e^{-\gamma\delta})\right)(-\nabla f^{b}_{y}(X_{k})-\frac{1}{\lambda}(X_{k}-\prox_{g}^{\lambda}(X_{k})))$}
\STATE{$X_{k+1}=|Y_{k+1}|$}
\ENDFOR
\RETURN $\{X_{k}: k\in\{1,\ldots,N\}\}$
\end{algorithmic}
\vspace{-0.1cm}
\end{algorithm}

\subsection{Reflected SKROCK}

The authors in \cite{vargas2020accelerating} dealt with the step-size limitation of
MYULA by applying a more sophisticated discretization scheme to simulate the Langevin SDE \eqref{eq:Langevin} based on explicit stabilized Runge-Kutta integrators \cite{abdulle2015explicit,optimal_explicit}. This scheme is known as the stochastic second kind orthogonal Runge-Kutta Chebyshev (SKROCK) method, and allows for accelerated and efficient sampling from the posterior distribution \cite{optimal_explicit,vargas2020accelerating}. Under the non-negativity requirement, we propose the Reflected SKROCK (R-SKROCK) algorithm by applying reflection after every gradient evaluation. The algorithm is presented in \eqref{alg:r_SKROCK}.

\subsection{Reflected MYUULA}

In a similar manner with MYULA, the authors in \cite{goldman2021gradientbased} used a MY smoothed approximation $U^{\lambda}$ of a non-differentiable $U=-\log\pi$ to solve the underdamped dynamics in \eqref{eq:underdampedLangevin}. Literature regarding formulations of Langevin-based algorithms to integrate (\ref{eq:underdampedLangevin}) can be found, for instance, in \cite{cheng2018underdamped,mclachlan_quispel_2002,sanzserna2021wasserstein}. In this work, we present a reflected version of the the well-known Euler exponential integrator \cite{cheng2018underdamped,hochbruck_ostermann_2010} to solve \eqref{eq:underdampedLangevin}. The algorithm is presented in \eqref{alg:r_MYUULA}.


\subsection{Implementation guidelines}\label{guidelines}

We now discuss suitable ranges and recommended values for the parameters of Algorithm \ref{alg:r_MYULA}, Algorithm \ref{alg:r_SKROCK} and Algorithm \ref{alg:r_MYUULA}. Our recommendations seek to provide general rules that are simple and robust, rather than optimal values for specific models.

\subsubsection*{Setting $\lambda$ and $b$} Similarly to \cite{durmus2016efficient}, we recommend to set $\lambda\in [L_{f^{b}_{y}}^{-1},10L_{f^{b}_{y}}^{-1}]$ and use $\lambda= L_{f^{b}_{y}}^{-1}$ in our numerical experiments (larger values of $\lambda$ lead to faster convergence at the cost of additional bias). Our experiments suggest that the value of $b$ is more critical in terms of controlling bias than $\lambda$. We recommend setting $b$ to be $1\%$ of the expected Mean Intensity Value (MIV) of $x$ (i.e., $||x||_{1}/n$), which provides a good trade-off of efficiency and accuracy. 

\subsubsection*{Setting $\delta$, $\gamma$, $u$, $s$, and $\eta$}
The convergence theory for MYULA \cite{durmus2016efficient} requires setting $\delta\in (0,\delta_{max})$ with $\delta_{max} = 1/(L_{f^{b}_{y}}+1/\lambda]$ (the value $\delta_{max}$ is derived from bounding the Lipschitz constant of $\nabla f_{y}^{b} + (x-\prox^{\lambda}_{g}(x))/\lambda$). In our experiments, given the dimensionality involved, and because we did not observe significant bias, we choose $\delta = \delta_{max}$ to improve convergence speed. Regarding R-MYUULA, following the spirit of \cite{cheng2018underdamped,sanzserna2021wasserstein}, we set $\delta = 2$, $u = 1/(L_{f^{b}_{y}}+1/\lambda)$, and $\gamma=2$. For R-SKROCK, similarly to \cite{vargas2020accelerating}, we recommend setting $\delta\in(0,\delta_{s}^{max}]$, where $\delta_{s}^{max}= l_{s}/(L_{f_{y}^{b}}+1/\lambda)$, $l_{s}=(s-0.5)^{2}(2-4/3\eta)-1.5\vspace{0.2cm}$ and $s\in\{3,\ldots,15\}$; we use $\delta = \delta_{s}^{max}$ with $s=10$ and $\eta=0.05$ in our experiments. Lastly, to maximise the asymptotic computational efficiency of R-PMALA we set $\delta$ to achieve an acceptance rate close to $57\%$ within the range of admissible values for geometric ergodicity $(0,\delta_{max})$ with $\delta_{max} = 1/(L_{f^{b}_{y}}+1/\lambda]$ \cite{optimal_scaling}.

\section{Numerical Experiments}\label{section_5}

In this section, the MCMC methods proposed in Section \ref{section4} are demonstrated through a range of experiments related to Poisson image deconvolution, binomial denoising and geometric image inpainting 
These experiments were selected to represent a variety of challenging configurations in terms of ill-posedness, ill-conditioning and dimensionality of $y$ and $x$. We report results for R-MYULA, R-SKROCK, R-MYUULA and R-PMALA and comparisons with MAP estimation by using the ADMM algorithm PIDAL \cite{5492199}. Because PIDAL was originally designed to work with Poisson likelihoods, in the binomial and geometric experiments we make minor changes to PIDAL so that it performs MAP inference w.r.t. to the correct model. Moreover, in the Poisson deconvolution experiment, we also report a comparison with the state-of-the-art Gibbs sampler SPA \cite{8683031}, which was successfully applied to total-variation Poisson image deblurring in \cite{8683031} (SPA combines MYULA with an augmentation-relaxation strategy to improve convergence speed, at the expense of more bias). 

To make the comparisons fair, and without loss of generality, we conduct all our experiments by using the same total-variation prior that is considered in PIDAL and SPA. This prior is given by 
$p(x|\theta)\propto \exp\{-\theta TV(x)\}$, where $TV(\cdot)$ denotes the isotropic total-variation pseudo-norm, and $\theta > 0$ is a regularisation parameter. We automatically estimate $\theta$ from $y$ for each experiment by marginal maximum likelihood estimation by using \cite[Algorithm 1]{vidal2020maximum}. Furthermore, for fairness, all methods are run with the same computing time budget. In particular, we generate $10^{6}$ samples for the R-MYULA and R-MYUULA, $10^{6}/s$ samples with $s=10$ for the R-SKROCK, $6.5 \times 10^{5}$ samples for the R-PMALA (there is an additional computational overhead associated with the Metropolis-Hastings (MH) step \cite{pereyra2015proximal}), and $6.8\times 10^{5}$ for SPA. In all cases, we use a $5\%$ burn-in period. Also, to check the asymptotic bias introduced by the approximation $\pi^{\lambda,b}$ of $\pi$, the discretization and the reflection, for each experiment we also run a long R-PMALA chain targetting $\pi$. In all cases, we observed a very good agreement between the unadjusted methods and the R-PMALA reference, indicating that the asymptotic bias is indeed very small.

The Monte Carlo samples produced by the MCMC methods are then used to compute the following quantities: 1) the posterior mean $E(x|y)$, which is the minimum mean square error (MMSE) Bayesian estimator of $x|y$; 2) uncertainty visualisation plots presenting the marginal standard deviation of pixels at different resolutions \cite{2018}; and 3) the sample autocorrelation function (ACF) of the fastest and slowest mixing components of the Markov chain (these correspond to the one-dimensional subspaces where the Markov chain achieves the highest and slowest convergence rates, corresponding to the subspaces with lowest and highest variance respectively\footnote{To identify the directions with smallest and largest uncertainty, someone would need to compute the posterior covariance matrix, an infeasible task in high dimensions. Instead, approximations of the posterior covariance are used by assuming that the latter is diagonalizable on the same basis as the forward operator $A$.}). For completeness, the ACF for a pixel with typical variance is also included in the plots (we use the median). Lastly, as a way of illustrating convergence speed, for each experiment we also plot the evolution of the estimated normalized root MSE (NRMSE) for the posterior mean estimate, as a function of the number of gradient evaluations.

\begin{figure}[h!]
    \centering
    \begin{subfigure}[t]{.24\linewidth}
        \centering
    	\includegraphics[width=\textwidth]{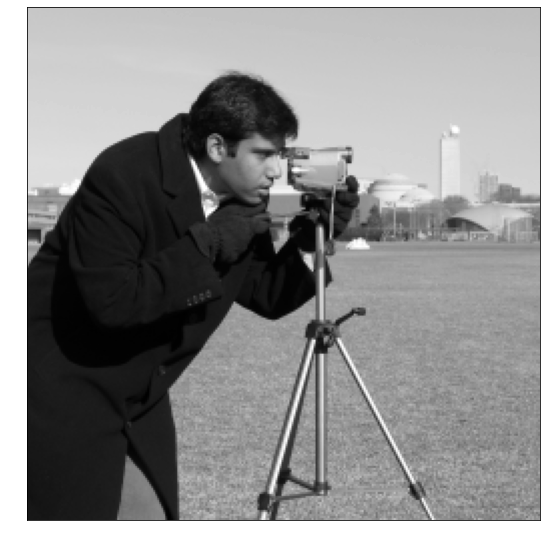}
    	\caption{\small{True image $x$}}
    	\label{fig:true_image_poisson}
    \end{subfigure} 
    \begin{subfigure}[t]{.24\linewidth}
        \centering
    	\includegraphics[width=\textwidth]{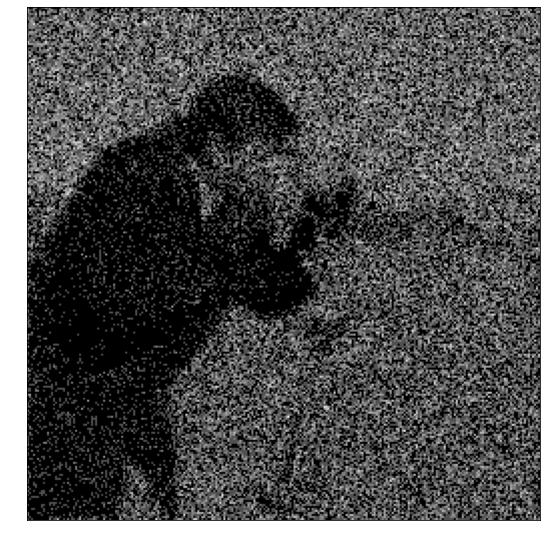}
    	\caption{\small{Observation $y$ (5.8 dB)}}
    	\label{fig:noisy_image_poisson}
    \end{subfigure} 
    \begin{subfigure}[t]{.24\linewidth}
        \centering
    	\includegraphics[width=\textwidth]{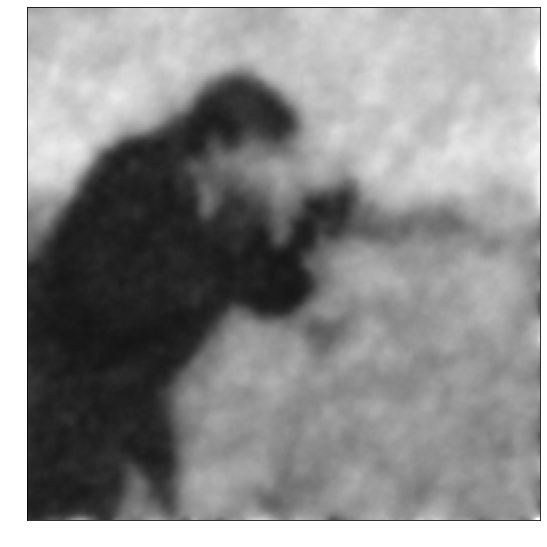}
    	\caption{\small{R-MYULA (20.4 dB)}} 
    \end{subfigure} 
    \begin{subfigure}[t]{.24\linewidth}
        \centering
    	\includegraphics[width=\textwidth]{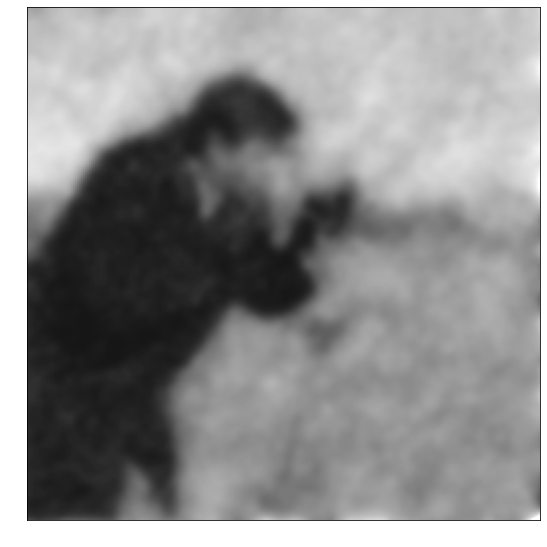}
    	\caption{\small{R-SKROCK (20.5 dB)}} 
    \end{subfigure}
    \begin{subfigure}[t]{.24\linewidth}
        \centering
    	\includegraphics[width=\textwidth]{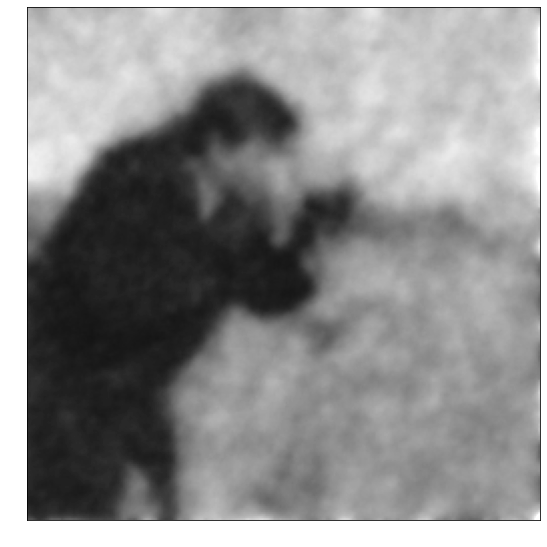}
    	\caption{\small{R-MYUULA (20.4 dB)}} 
    \end{subfigure}
    \begin{subfigure}[t]{.24\linewidth}
        \centering
    	\includegraphics[width=\textwidth]{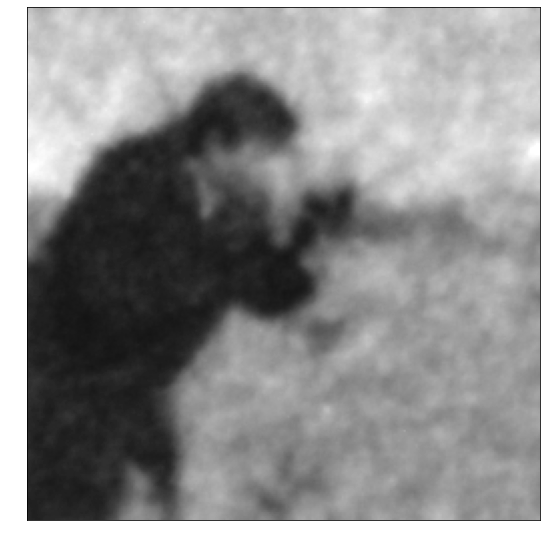}
    	\caption{\small{\hspace{-0.04cm}R-PMALA (20.5 dB)}} 
    \end{subfigure} 
    \begin{subfigure}[t]{.24\linewidth}
        \centering
    	\includegraphics[width=\textwidth]{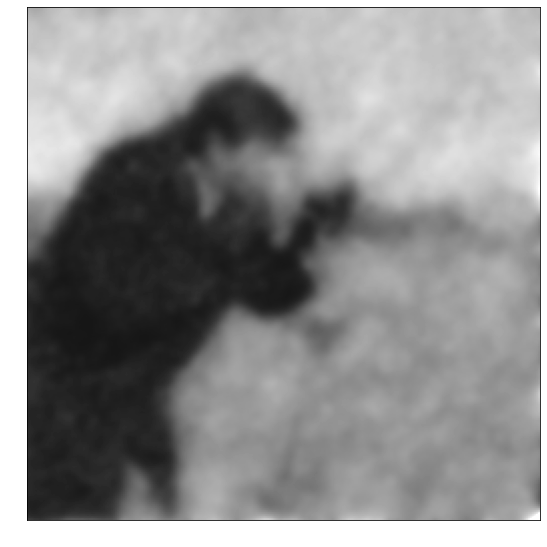}
    	\caption{\small{SPA (20.4 dB)}} 
    \end{subfigure} 
    \begin{subfigure}[t]{.24\linewidth}
        \centering
    	\includegraphics[width=\textwidth]{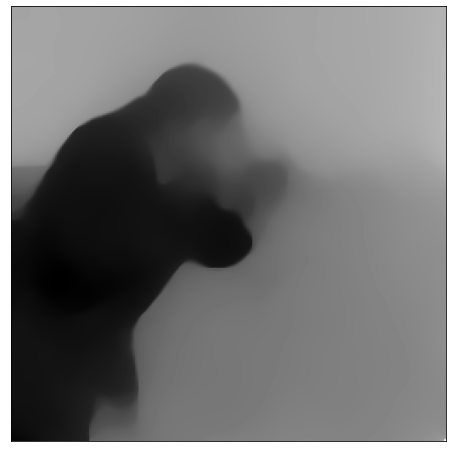}
    	\caption{\small{MAP (19.2 dB)}} 
    \end{subfigure}
    \caption{{\fontfamily{qcr}\selectfont Poisson} experiment: (a) True image $x$ of size $256\times256$; (b) Blurred observation $y$; (c)-(f) MMSE estimators; (h) MAP estimator.}
    \label{fig:post_mean_pois}
\end{figure}

 
\subsection{Non-blind Poisson image deconvolution}
In this experiment, we consider the estimation of a high-resolution image $x\in\mathbb{R}^{n}$ from a blurred observation $y\sim\mathcal{P}(Ax)$, where the blur operator is known. We chose $A$ to be nearly singular leading to an ill-conditioned problem with highly noise-sensitive solutions. The posterior distribution is given by
\begin{equation}\label{posterior_poisson}
\pi \triangleq p(x|y) \propto \exp(-f_{\mathcal{P}}(x)- \theta TV(x))\,\,.
\end{equation}
where we recall that $f_{\mathcal{P}}(\cdot)$ is the Poisson negative log-likelihood defined in \eqref{eq:poissonloglikelihood}.

Figure \ref{fig:true_image_poisson} presents the ground truth {\fontfamily{qcr}\selectfont cameraman} image $x$ of size $n=256\times256$ pixels, whose intensities we scaled for this experiment so that its MIV is $1$ (recall that in the Poisson model the noise power is directly determined by the intensities of the image $x$, so in order to test different signal-to-noise ratios it is necessary to scale the ground truth image). Figure \ref{fig:noisy_image_poisson} shows a realization $y$ with peak signal-to-noise ratio (PSNR) $\sim 5.8$ dB  generated by the observation model \eqref{pois_model} with $A$ being a $5\times5$ uniform blur. For this experiment, the maximum marginal likelihood estimate of $\theta$ is $\hat{\theta} = 5.65$. All the methods are implemented by using this value, and by following the recommendations of Section \ref{guidelines}. For R-PMALA, achieving the optimal acceptance probability of $57\%$ is not possible in this experiment as the maximum admissible value $\delta_{max} = 1/(L_{f^{b}_{y}}+1/\lambda)$ leads to an acceptance probability of $85\%$ (we use this value of $\delta$ for our experiments). Regarding SPA \cite{8683031}, we test a large range of values in a supervised manner and find that $(\rho, \alpha) = (0.035, 0.035)$ provide a competitive trade-off between asymptotic accuracy and convergence speed.

Figure \ref{fig:post_mean_pois} presents also the MMSE estimates given by the proposed methods and the MAP estimate calculated by PIDAL. The MMSE estimates are visually similar with an apparent staircase effect while the MAP estimate returns an oversmoothed reconstruction. It should be noted that this oversmoothed result is related to the particular choice of $\theta$. A value for $\theta$ that improves the MAP estimate can be found (e.g. through cross-validation) but this would require knowledge of the true image (the MMSE restults can also be marginally improved by tailoring $\theta$ in this way). The PSNR values of each method are also reported in Figure \ref{fig:post_mean_pois}. The MMSE estimates have almost identical PSNR values and outperform the MAP estimate in this case.

\begin{figure}[h!]
    \centering
    \begin{subfigure}[t]{0.312\linewidth}
        \centering
    	\includegraphics[width=\textwidth]{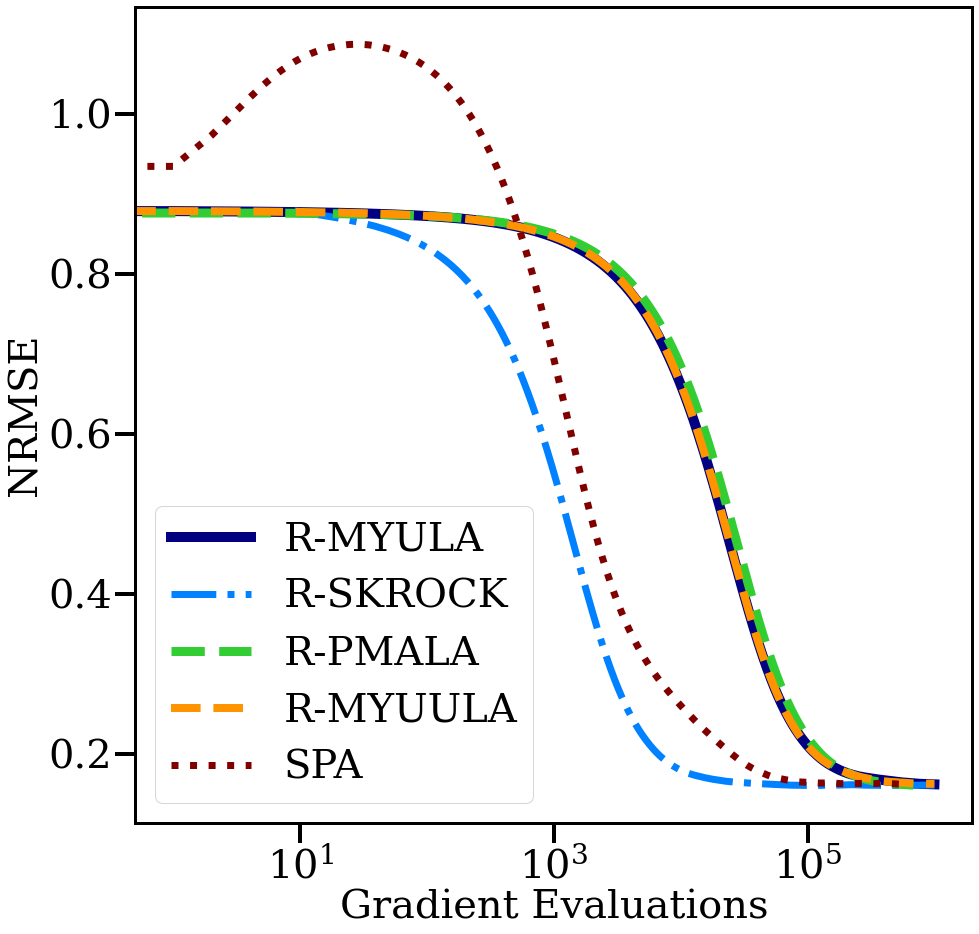}
     	\caption{}
     	\label{fig:NRMSE_pois}
    \end{subfigure} 
    \begin{subfigure}[t]{0.32\linewidth}
        \centering
    	\includegraphics[width=\textwidth]{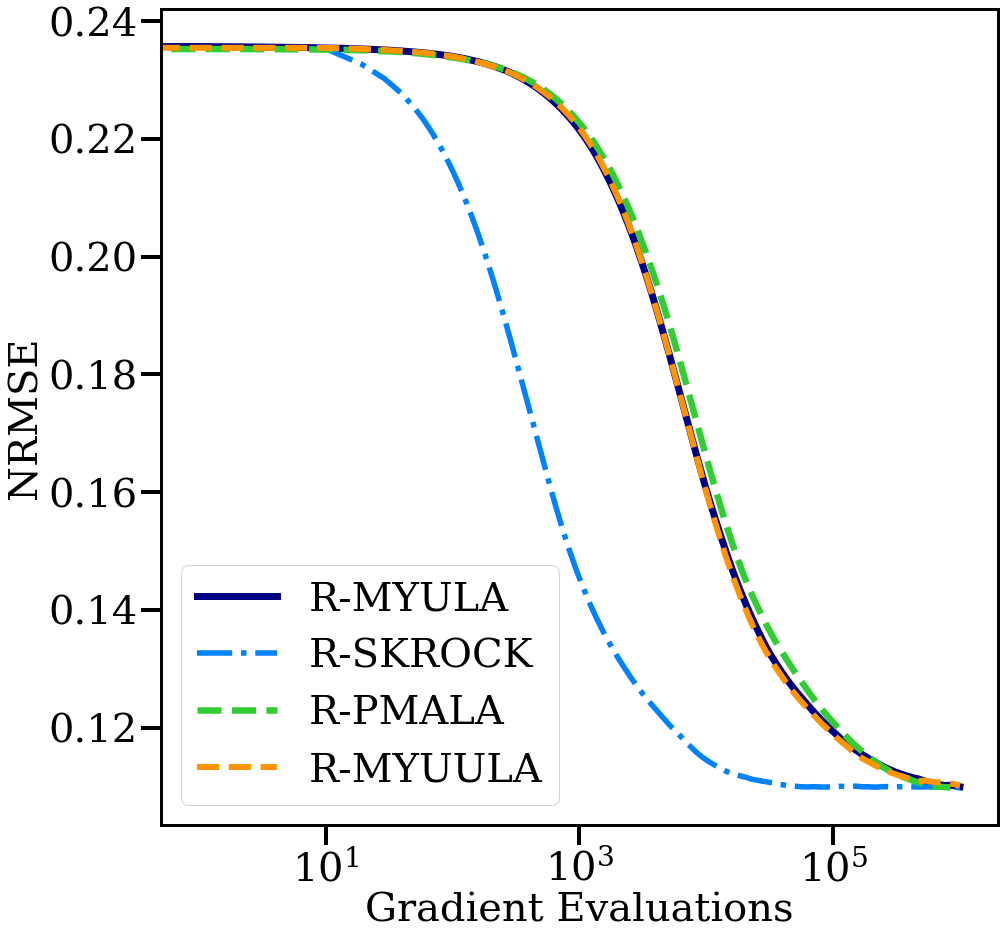}
    	\caption{}
    	\label{fig:NRMSE_bin}
    \end{subfigure} 
    \begin{subfigure}[t]{0.32\linewidth}
        \centering
    	\includegraphics[width=\textwidth]{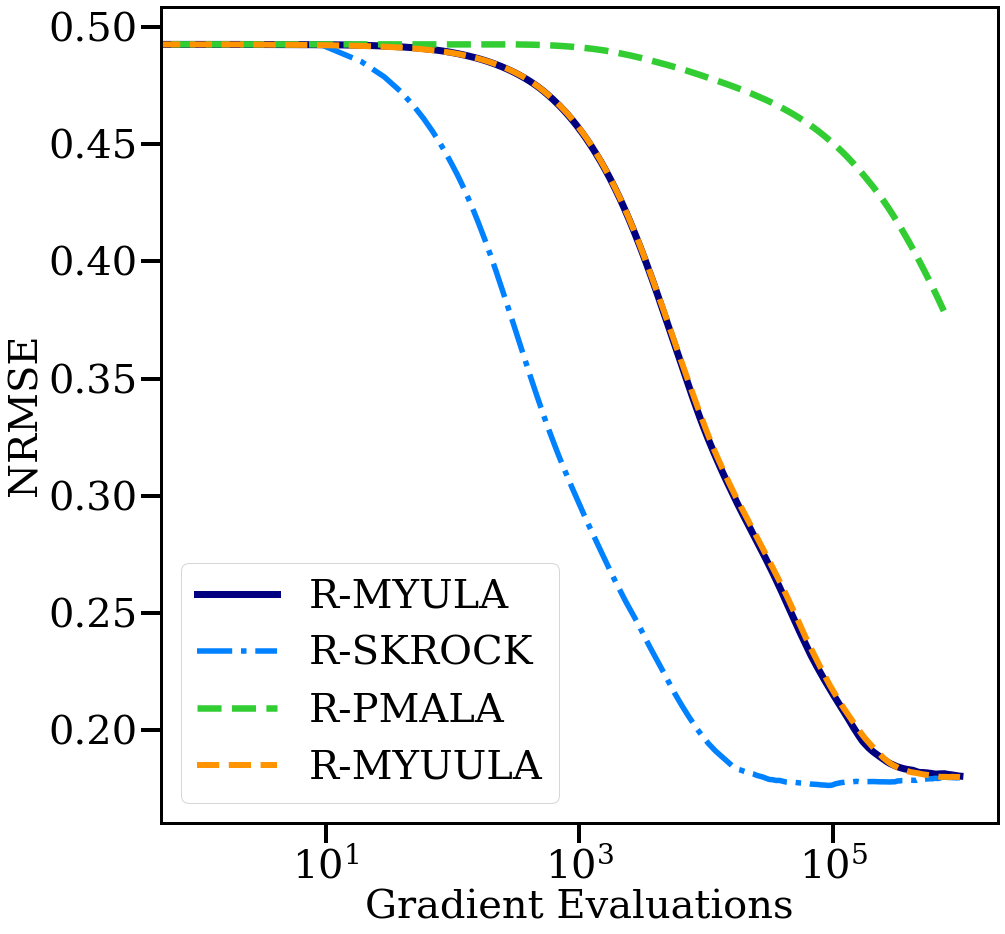}
    	\caption{}
    	\label{fig:NRMSE_geo}
    \end{subfigure} 
    \vspace{-0.2cm}
    \caption{Evaluation of NRMSE for (a) the Poisson deblurring experiment (b) the binomial denoising experiment and (c) the geometric inpainting experiment. The number of gradient evaluations are presented in log-scale (base 10).}
    \label{fig:NRMSEs}
\end{figure}

Figure \ref{fig:NRMSE_pois} presents the evolution of the NRMSE estimation for the MMSE solutions as a function of the number of gradient evaluations (in log-scale). We observe that R-SKROCK has the highest convergence speed followed by SPA whose NRMSE increases for a small number of iterations. This is due to a transient regime of the auxiliary variables introduced in SPA. It can also be observed that R-MYULA and R-MYUULA have similar convergence speed, and R-PMALA has slightly slower convergence speed due to its Metropolised nature.

\begin{figure}[t!]
    \centering
    \begin{subfigure}[t]{\linewidth}
        \centering
    	\includegraphics[width=\textwidth]{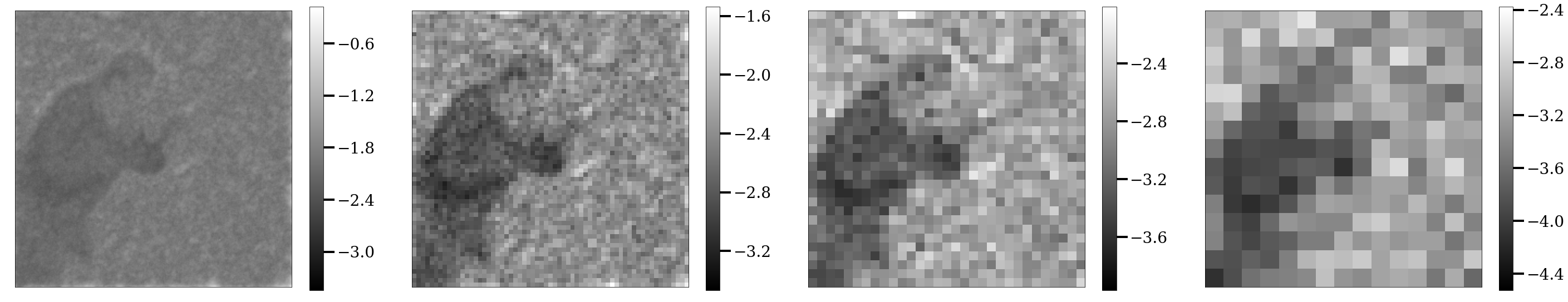}
     	\caption{R-MYULA: standard deviation (log-scaled) at different scales.}
    \end{subfigure}
    \begin{subfigure}[t]{\linewidth}
        \centering
    	\includegraphics[width=\textwidth]{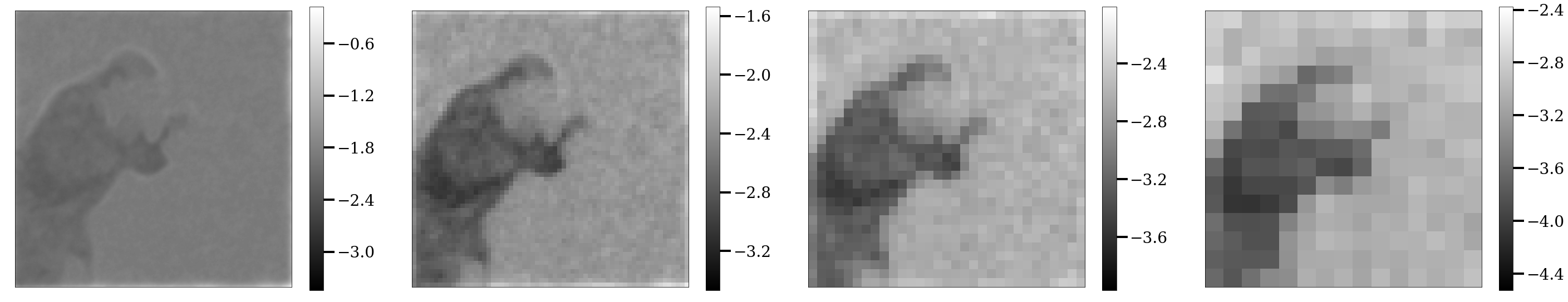}
    	\caption{R-SKROCK: standard deviation (log-scaled) at different scales.}
    \end{subfigure} 
    \begin{subfigure}[t]{\linewidth}
        \centering
    	\includegraphics[width=\textwidth]{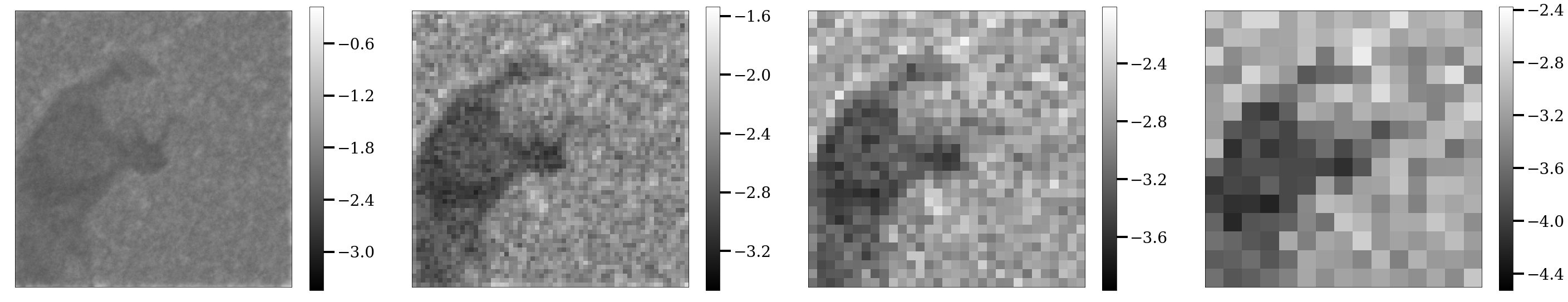}
    	\caption{R-MYUULA: standard deviation (log-scaled) at different scales.}
    \end{subfigure} 
    \begin{subfigure}[t]{\linewidth}
        \centering
    	\includegraphics[width=\textwidth]{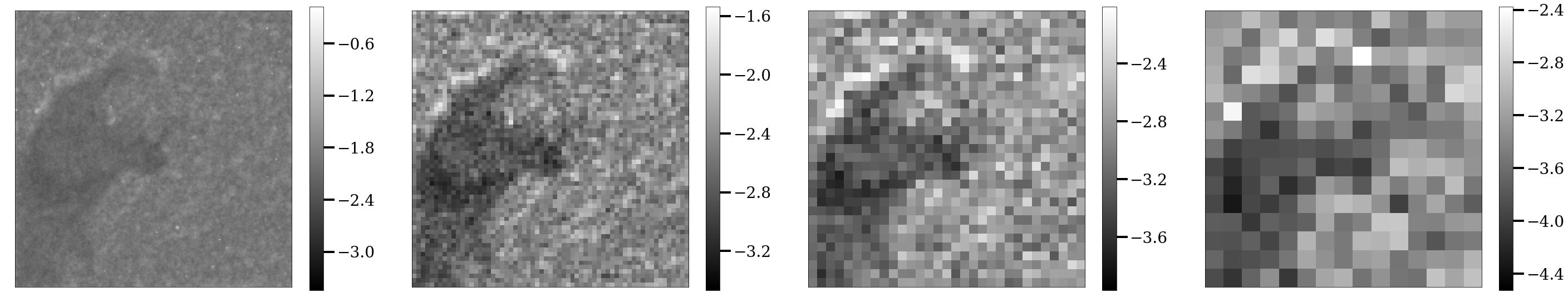}
    	\caption{R-PMALA: standard deviation (log-scaled) at different scales.}
    \end{subfigure} 
    \caption{Marginal posterior standard deviation for the Poisson deblurring problem at different scales (0,2,4,8 from left to right). The scale $i$ corresponds to a downsampling by a factor $2i$ of the original sample size. Most of the uncertainty is located around the edges and the cameraman's background.}
    \label{fig:st_dev_poisson}
\end{figure}

In Figure \ref{fig:st_dev_poisson}, we present uncertainty visualization plots that are useful to quantify the uncertainty related to image structures at different spatial scales. Specifically,  for different scales $j$, we downsample (by averaging) the stored samples by a factor of $2j$ before computing the standard deviation. The estimates of the pixel-wise standard deviations ($j=0$) obtained by each algorithm are presented in the first column of Figure \ref{fig:st_dev_poisson}. It is observed that the estimates obtained by R-MYULA, R-PMALA and R-MYUULA are less accurate of the respective one obtained by R-SKROCK in agreement with the experiments under Gaussian noise in \cite{vargas2020accelerating}. High uncertainty is concentrated around the edges of the cameraman which is expected from the particular choice of prior. Some uncertainty also exists within the background of the scene, since the likelihood is highly uninformative for these pixel regions and the prior does not carry any information to recover important details (e.g. the buildings). Note that because of the Poisson likelihood and the positivity constraints, the uncertainties appear larger in the background (higher average intensity) than on the coat of the cameraman (lower average intensity). However, the relative uncertainties (i.e., when dividing by the pixelwise true intensities) would be higher in dark than bright pixels. For larger structures of pixels ($j \neq 0$), the highest uncertainty is observed at the background of the cameraman's image, since there both the likelihood and the prior are the most weak.

\begin{figure}[h!]
    \centering
    \begin{subfigure}[t]{0.32\linewidth}
        \centering
    	\includegraphics[width=\textwidth]{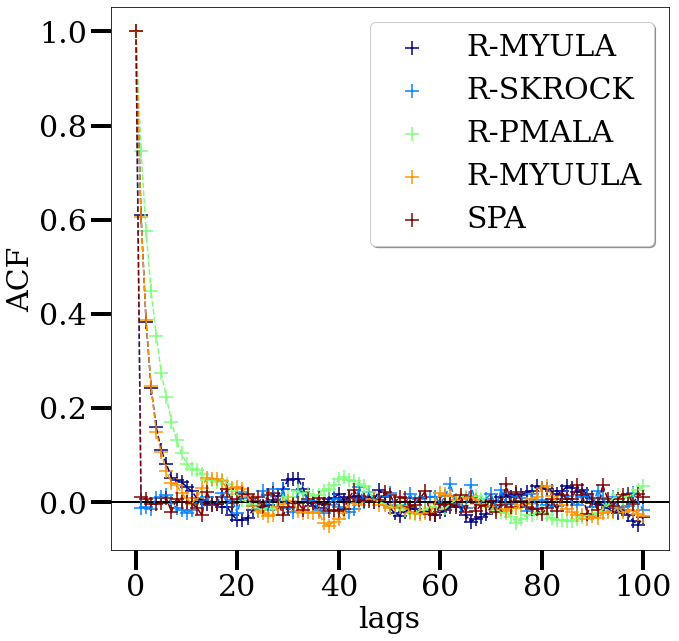}
     	\caption{Fastest direction}
    \end{subfigure} 
    \begin{subfigure}[t]{0.32\linewidth}
        \centering
    	\includegraphics[width=\textwidth]{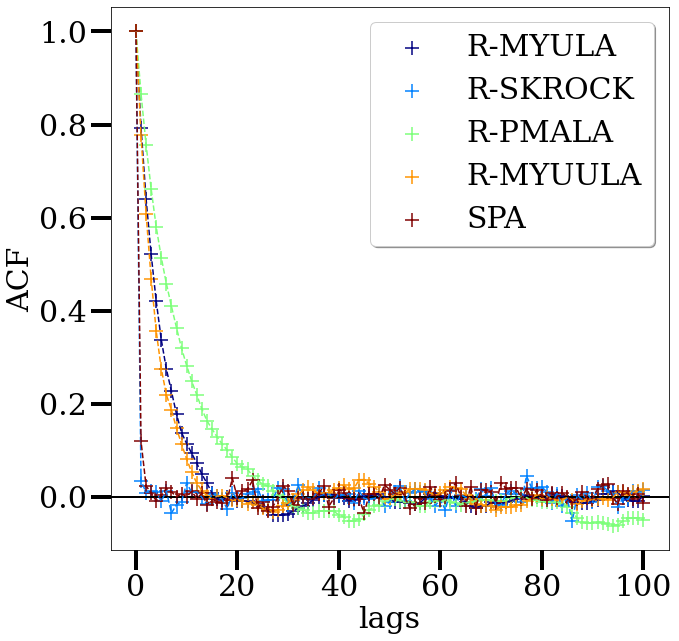}
    	\caption{Median direction}
    \end{subfigure} 
    \begin{subfigure}[t]{0.32\linewidth}
        \centering
    	\includegraphics[width=\textwidth]{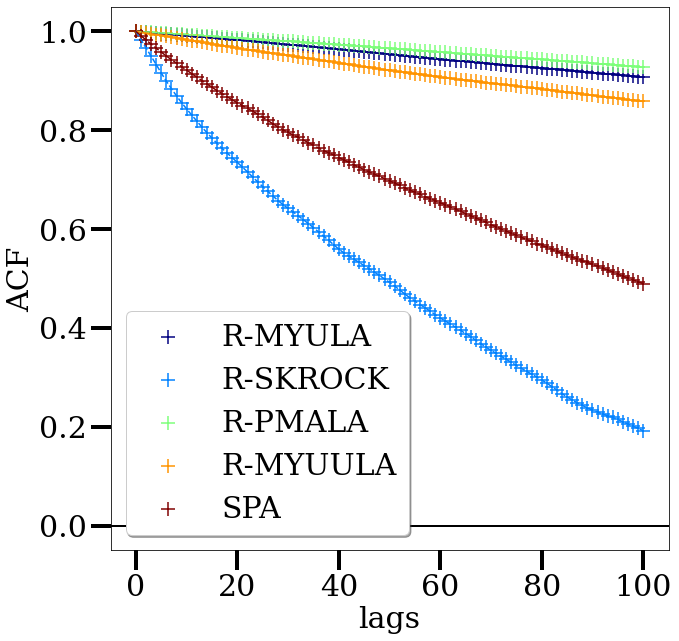}
    	\caption{Slowest direction}
    \end{subfigure} 
    \caption{ACF for the fastest, median, and slowest direction for the Poisson deblurring problem. The ACF is shown for lags up to 100 for all images in the Fourier domain.}
    \label{fig:autocor_pois}
\end{figure}

\begin{table}[h!]
    \footnotesize
 	\centering
    \scalebox{1}{\begin{tabular}{| c | c | c | c | c |}
    \hline
    \textbf{Poisson deblurring cases} & \multicolumn{2}{c |}{\bfseries MMSE} &  \multicolumn{2}{c |}{\bfseries MAP} \\ 
    \cline{2-5}
    & \textbf{NRMSE} & \textbf{PSNR} & \textbf{NRMSE} & \textbf{PSNR} \\
    \hline
    \hspace*{-0.5cm} $\text{MIV} = 1$, $\text{g.e.}=1\times10^{6}$ &  \textbf{0.1612} &  \textbf{20.53 dB} &  0.1873 & 19.25 dB  \\ \hline
    \hspace*{-0.35cm} $\text{MIV} = 10$, $\text{g.e.}=2\times10^{6}$ &  \textbf{0.1143} & \textbf{23.54 dB}  & 0.1186  & 23.22 dB \\ \hline
    $\text{MIV} = 100$, $\text{g.e.}=5\times10^{6}$ & \textbf{0.088}  & \textbf{25.74 dB} & 0.089 & 25.72 dB \\ \hline
    \end{tabular}}
    \caption{Comparison of MMSE and MAP estimator for different Poisson deblurring cases. After a large number of gradient evaluations (g.e.), the algorithms exhibit identical performance.} 
    \label{table_pois}
\end{table}

To conclude this experiment, the sample ACFs are computed for the slowest and the fastest directions in the Fourier domain\footnote{The slowest (fastest) direction corresponds to the Fourier coefficient with the highest (lowest) variance.}. In Figure \ref{fig:autocor_pois}, we see that independence is reached in approximately 20 iterations (after thinning) in the median and fastest directions, and it is much slower for the few very uncertain coefficients. In addition, the superiority of R-SKROCK and SPA in sense of convergence properties can be observed along all directions. The ACF of the R-SKROCK samples along the slowest direction decay the fastest among all the methods.

Finally, Table \ref{table_pois} reports comparisons between the MMSE estimation and the MAP estimation under less severe Poisson noise (i.e. higher MIV for the target image). We run the MCMC algorithms enough time so the calculated MMSE estimates of each method have identical performance. It is observed that the MMSE estimate outperforms the MAP estimate, particularly in cases where the likelihood is uninformative (i.e. low MIV).

\begin{figure}[h!]
    \centering
    \begin{subfigure}[t]{.24\linewidth}
        \centering
    	\includegraphics[width=\textwidth]{images/ground_truth.png}
    	\caption{\small{True image $x$}}
    \end{subfigure} 
    \begin{subfigure}[t]{.24\linewidth}
        \centering
    	\includegraphics[width=\textwidth]{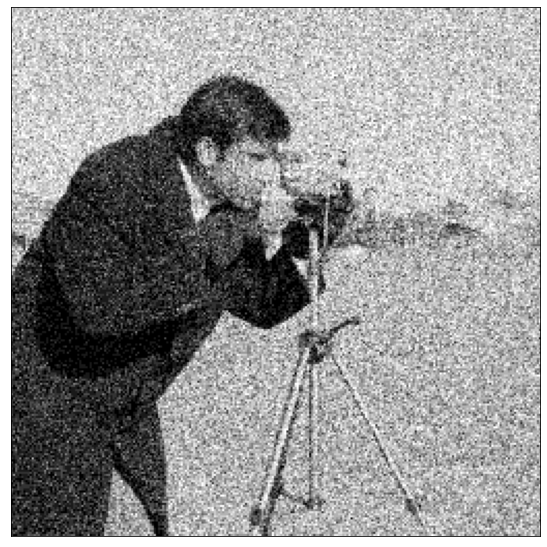}
    	\caption{\small{Observation $y$ (17.25 dB)}}
    	\label{fig:bin_noise_image}
    \end{subfigure} 
    \begin{subfigure}[t]{.24\linewidth}
        \centering
    	\includegraphics[width=\textwidth]{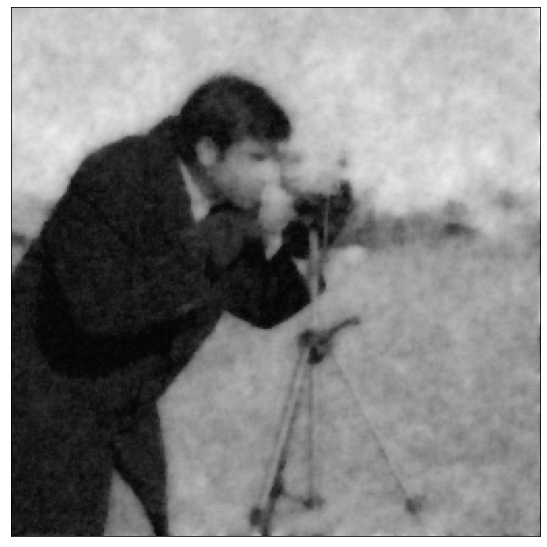}
    	\caption{\small{R-MYULA (23.8 dB)}} 
    \end{subfigure} 
    \begin{subfigure}[t]{.24\linewidth}
        \centering
    	\includegraphics[width=\textwidth]{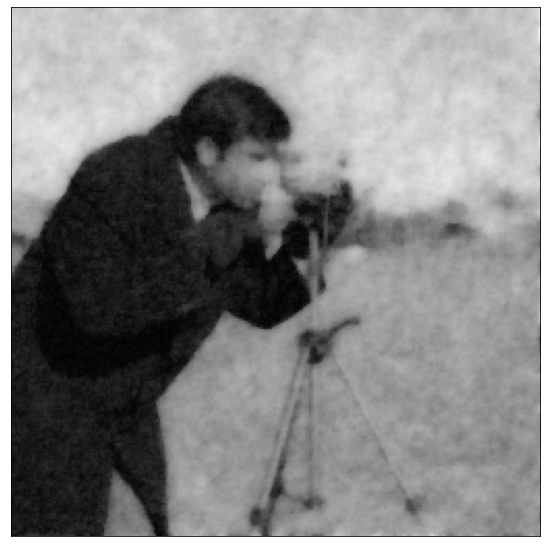}
    	\caption{\small{R-SKROCK (23.9 dB)}} 
    \end{subfigure}
    \begin{subfigure}[t]{.24\linewidth}
        \centering
    	\includegraphics[width=\textwidth]{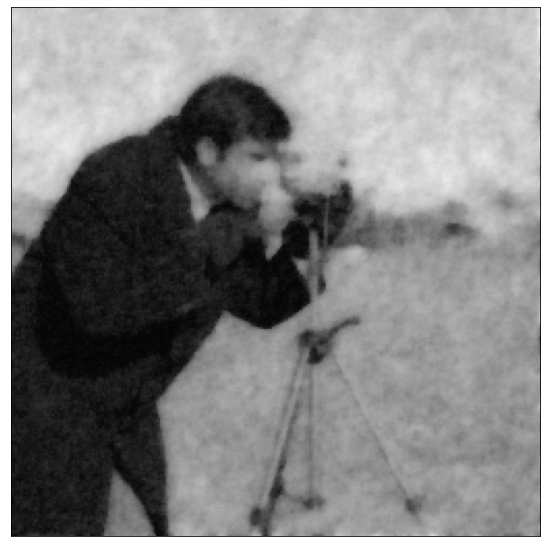}
    	\caption{\small{R-MYUULA (23.8 dB)}} 
    \end{subfigure}
    \begin{subfigure}[t]{.24\linewidth}
        \centering
    	\includegraphics[width=\textwidth]{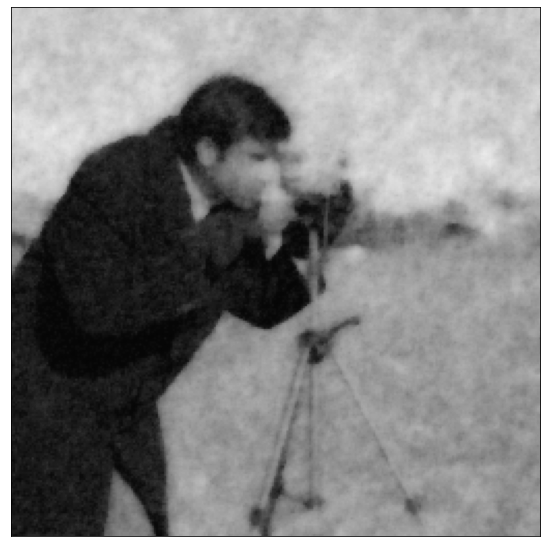}
    	\caption{\small{\hspace{-0.04cm}R-PMALA (23.8 dB)}} 
    \end{subfigure} 
    \begin{subfigure}[t]{.24\linewidth}
        \centering
    	\includegraphics[width=\textwidth]{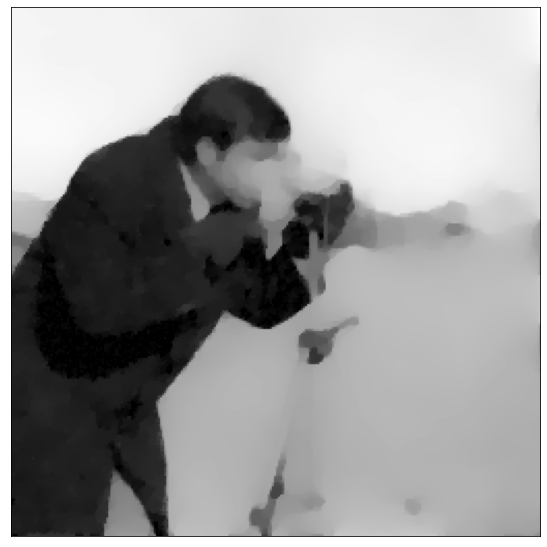}
    	\caption{\small{MAP (23.2 dB)}}
    \end{subfigure}
    \caption{{\fontfamily{qcr}\selectfont Binomial} experiment: (a) True image $x$; (b) Observation $y$; (c)-(f)  MMSE estimators; (g) MAP estimator.}
    \label{fig:post_mean_bin}
\end{figure}

\subsection{Binomial denoising}

We consider now a binomial denoising problem, where we seek to estimate a high-resolution image $x\in\mathbb{R}^{n}$ from a realization $y\sim\mathcal{B}in(t,1-e^{-x})$. The posterior distribution is given by
\begin{equation}\label{posterior_bin}
\pi \triangleq p(x|y) \propto \exp(-f_{\mathcal{B}}(x)- \theta TV(x))~~,
\end{equation}
where $f_{\mathcal{B}}(\cdot)$ is the binomial negative log-likelihood in \eqref{eq:bin_likelihood}.


\noindent Figure \ref{fig:bin_noise_image} presents a binomial realization $y$ of the {\fontfamily{qcr}\selectfont cameraman} image generated by using the observation model \eqref{bin_model} with $A=I$, $t_{i} = 10 \hspace{0.2cm}\forall i=1,\ldots,n$ and the MIV of $x$ equal to $1$ (recall that in the binomial model the noise power is directly determined by the intensities of the image x and the repetition periods $t$). For this experiment, the maximum marginal likelihood estimate of $\theta$ is $\hat{\theta}=6.4$.  All the methods are implemented by using this value, and by following the recommendations of Section \ref{guidelines}. For R-PMALA, we achieved an acceptance rate of $83\%$ by setting $\delta=\delta_{max}=1/(L_{y}^{b}+1/\lambda)$.

\begin{figure}[b!]
    \centering
    \begin{subfigure}[t]{\linewidth}
        \centering
    	\includegraphics[width=\textwidth]{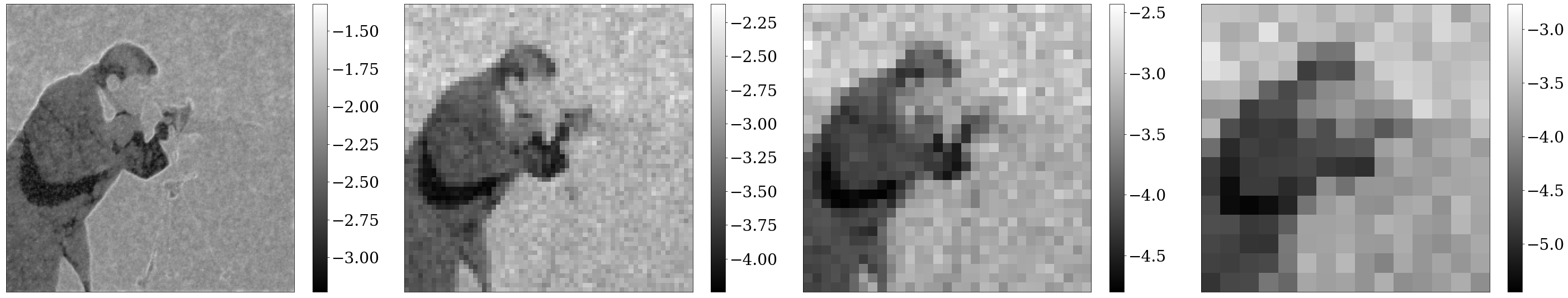}
     	\caption{R-MYULA: standard deviation (log-scaled)  at different scales.}
    \end{subfigure}
    \begin{subfigure}[t]{\linewidth}
        \centering
    	\includegraphics[width=\textwidth]{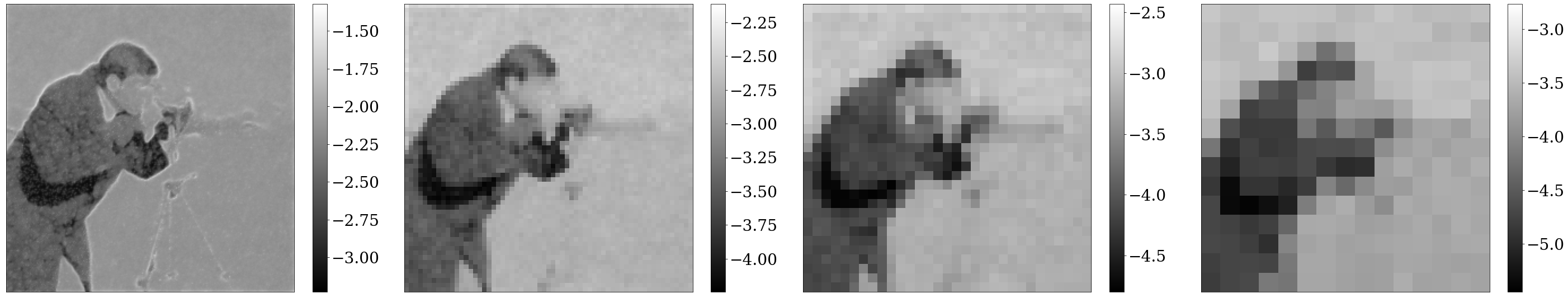}
    	\caption{R-SKROCK: standard deviation (log-scaled) at different scales.}
    \end{subfigure} 
    \begin{subfigure}[t]{\linewidth}
        \centering
    	\includegraphics[width=\textwidth]{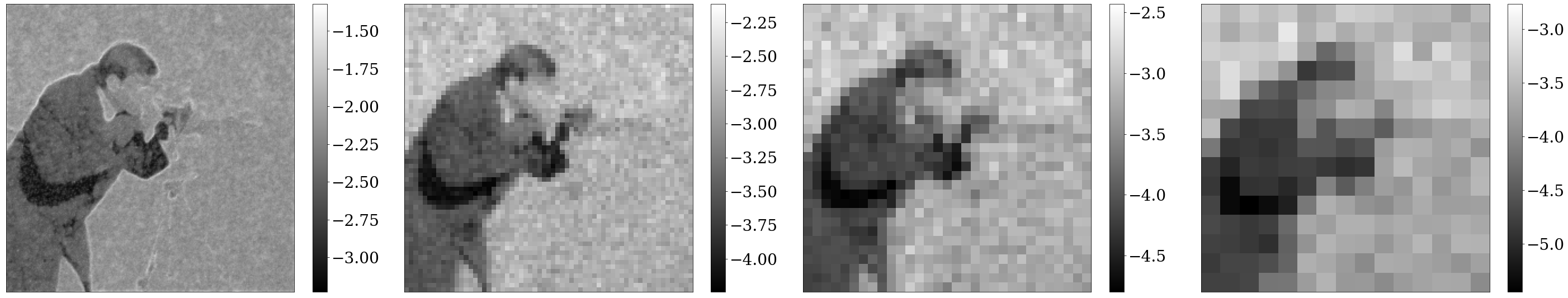}
    	\caption{R-MYUULA: standard deviation (log-scaled) at different scales.}
    \end{subfigure} 
    \begin{subfigure}[t]{\linewidth}
        \centering
    	\includegraphics[width=\textwidth]{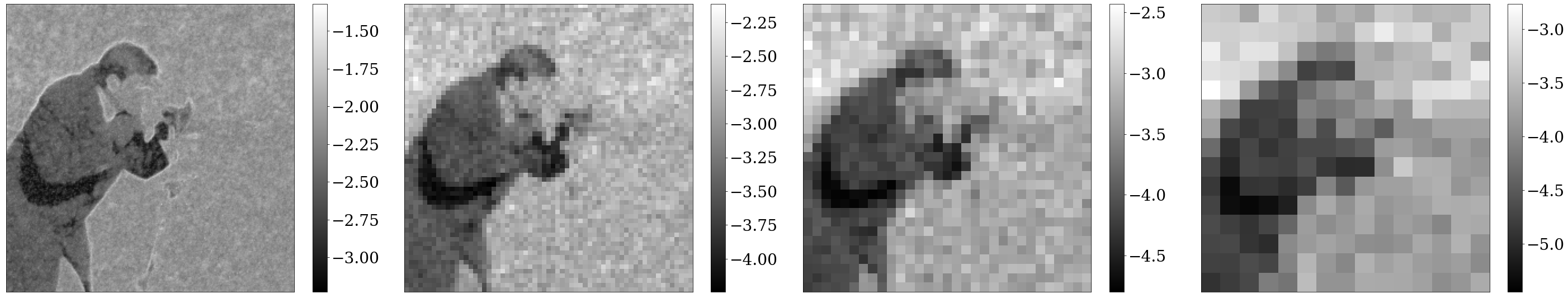}
    	\caption{R-PMALA: standard deviation (log-scaled) at different scales.}
    \end{subfigure} 
    \caption{Marginal posterior standard deviation for the binomial denoising problem at different scales (0,2,4,8 from left to right). The scale $i$ corresponds to a downsampling by a factor $2i$ of the original sample size. Most of the uncertainty is located around the edges and the cameraman's background.}
    \label{fig:st_dev_bin}
\end{figure}

Figure \ref{fig:post_mean_bin} also presents the MMSE estimates given by the proposed methods and the MAP estimate calculated by adjusting PIDAL for binomial likelihoods. The MMSE estimates are visually similar with an apparent staircase effect while the MAP estimate returns an oversmoothed reconstruction. The PSNR values of each method are reported in the respective captions in Figure \ref{fig:post_mean_bin}. The MMSE estimates have almost identical PSNR values which are higher than the respective one of the MAP estimate. 


Figure \ref{fig:NRMSE_bin} presents the evolution of the NRMSE estimation for the MMSE solutions as a function of the number of gradient evaluations (in log-scale). We observe that R-SKROCK has the highest convergence speed, R-MYULA and R-MYUULA have similar convergence speed, and R-PMALA has the slowest convergence among the proposed methods.

The estimates of the pixel-wise standard deviations ($j=0$) are presented in the first column of Figure \ref{fig:st_dev_bin}. It is observed that the estimates obtained by R-MYULA, R-PMALA and R-MYUULA are slightly more noisy than respective one obtained by R-SKROCK. High uncertainty is concentrated around the edges (e.g. cameraman figure and buildings) and some uncertainty also exists in homogeneous regions, since the likelihood is quite uninformative for these pixel regions. For larger structures of pixels ($j \neq 0$), the highest uncertainty is spotted at the background part of the cameraman's image. see Figure \ref{fig:st_dev_bin}. R-SKROCK returns a smooth estimate of the posterior variance while R-PMALA returns a slightly noiser estimate since its Metropolised nature leads to higher estimation variance \cite{2018,durmus2016efficient}.

\begin{figure}[h!]
    \centering
    \begin{subfigure}[t]{0.32\linewidth}
        \centering
    	\includegraphics[width=\textwidth]{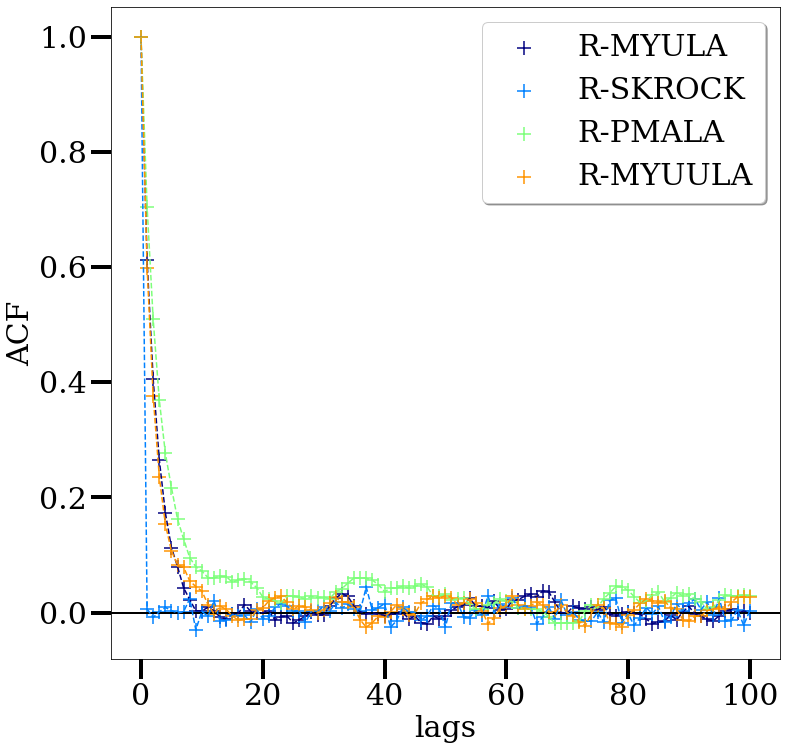}
     	\caption{Fastest component}
    \end{subfigure} 
    \begin{subfigure}[t]{0.32\linewidth}
        \centering
    	\includegraphics[width=\textwidth]{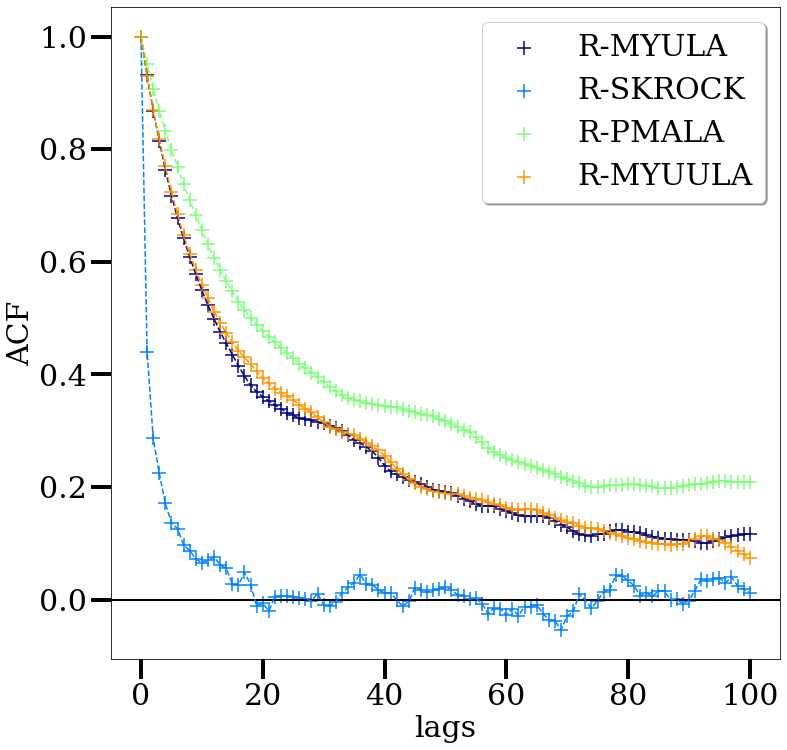}
    	\caption{Median component}
    \end{subfigure} 
    \begin{subfigure}[t]{0.32\linewidth}
        \centering
    	\includegraphics[width=\textwidth]{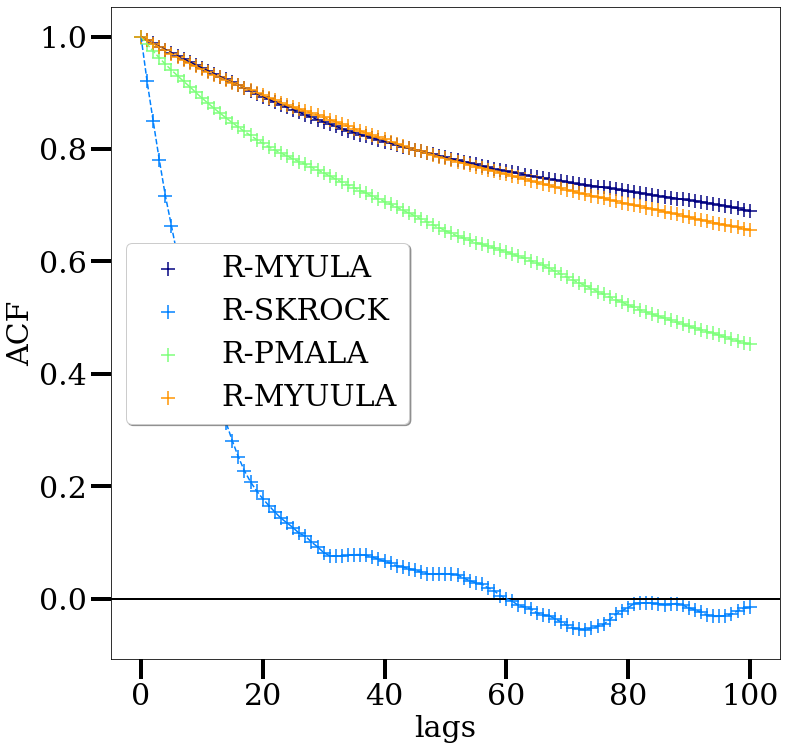}
    	\caption{Slowest component}
    \end{subfigure} 
    \caption{ACF for the fastest, median, and slowest components for the binomial denoising problem. The ACF is shown for lags up to 100 for all images in the pixel domain.}
    \label{fig:autocor_bin}
\end{figure}

Finally, in Figure \ref{fig:autocor_bin}, we see that independence is reached fast for the components of low or median uncertainty, and is much slower for the few very uncertain pixels. The superiority of R-SKROCK in sense of convergence properties compared with the other methods can also be observed. Particularly, the ACF of the R-SKROCK samples along the slowest component decay the fastest among the ones given by R-MYULA, R-MYUULA, and R-PMALA.

Table \ref{table_bin} reports NRMSE and PSNR comparisons between the MMSE estimation and the MAP estimation for different cases of binomial noise. We run the MCMC algorithms enough time so the calculated MMSE estimates of each method have identical performance. It is observed that the MMSE estimator outperforms the MAP estimator in cases where the likelihood is quite uninformative (low MIV or low number of repetition periods $t$) while the MAP estimator is more competitive under more informative likelihoods.

\begin{table}[h!]
    \footnotesize
 	\centering
    \scalebox{1}{\begin{tabular}{| c | c | c | c | c |}
    \hline
    \textbf{Binomial denoising cases} & \multicolumn{2}{c |}{\bfseries MMSE} &  \multicolumn{2}{c |}{\bfseries MAP} \\ 
    \cline{2-5}
    & \textbf{NRMSE} & \textbf{PSNR} & \textbf{NRMSE} & \textbf{PSNR} \\
    \hline
    MIV = 1 , t=10 , g.e.$=5\times 10^{6}$ &  \textbf{0.1092} &  \textbf{23.94 dB} &  0.1189 & 23.20 dB  \\ \hline
    \hspace{0.07cm} MIV = 1 , t=100 , g.e.$=5\times 10^{6}$  &  0.0592 &  29.25 dB &  \textbf{0.0573} & \textbf{29.54 dB}  \\ \hline
    \hspace{0.37cm} MIV = 0.1 , t=100 , g.e.$=3\times 10^{6}$ &  \textbf{0.0946} & \textbf{25.19 dB}  &  0.099 &  24.76 dB\\ \hline
    \hspace{0.58cm} MIV = 0.1 , t=1000 , g.e.$=5\times 10^{6}$ & 0.0496  & 30.79 dB & \textbf{0.045} & \textbf{31.66 dB} \\ \hline
    \end{tabular}}
    \caption{Comparison of MMSE and MAP estimator for different binomial denoising cases. After a large number of gradient evaluations (g.e.), the algorithms exhibit identical performance.}
    \label{table_bin}
    \vspace{-0.4cm}
\end{table}

\subsection{Geometric Inpainting}

We consider now a very challenging inpainting problem, where we seek to estimate a high-resolution image $x\in\mathbb{R}^{n}$ from a set of geometric  measurements $t\sim\mathcal{G}eo(1-e^{-Ax})$, where $A$ is a $m\times n$ matrix containing $m\leq n$ randomly selected rows of the $n\times n$ identity matrix. The posterior distribution is given by
\begin{equation}\label{posterior_geo}
\pi \triangleq p(x|t) \propto \exp(-f_{\mathcal{G}}(x)- \theta TV(x))~~,
\end{equation}
where $f_{\mathcal{G}}(\cdot)$ is the geometric log-likelihood in \eqref{geo_model}.

\begin{figure}[h!]
    \centering
    \begin{subfigure}[t]{.24\linewidth}
        \centering
    	\includegraphics[width=\textwidth]{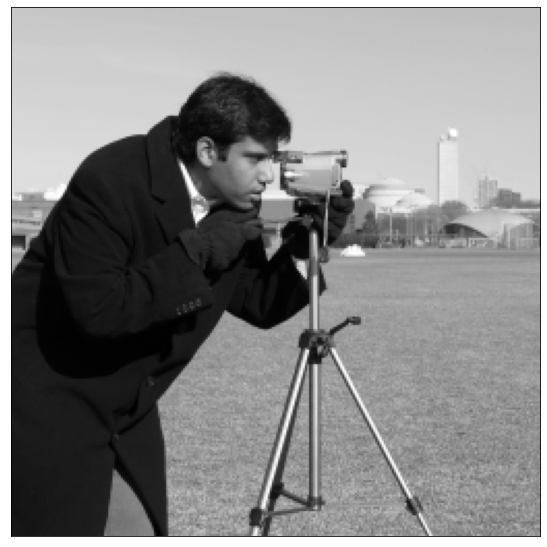}
    	\caption{\small{True image $x$}}
    \end{subfigure} 
    \begin{subfigure}[t]{.24\linewidth}
        \centering
    	\includegraphics[width=\textwidth]{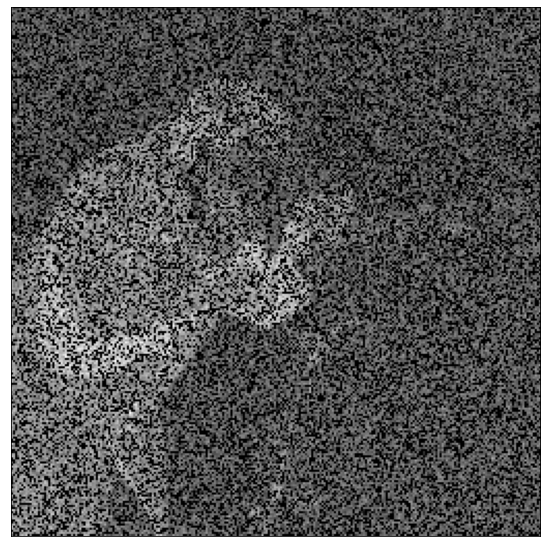}
    	\caption{\small{Observation $\log(A^{T}t+1)$}}
    \end{subfigure} 
    \begin{subfigure}[t]{.24\linewidth}
        \centering
    	\includegraphics[width=\textwidth]{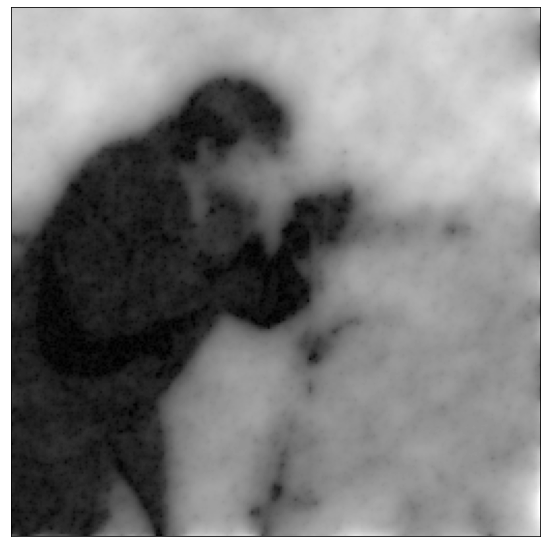}
    	\caption{\small{R-MYULA (19.6 dB)}} 
    \end{subfigure} 
    \begin{subfigure}[t]{.24\linewidth}
        \centering
    	\includegraphics[width=\textwidth]{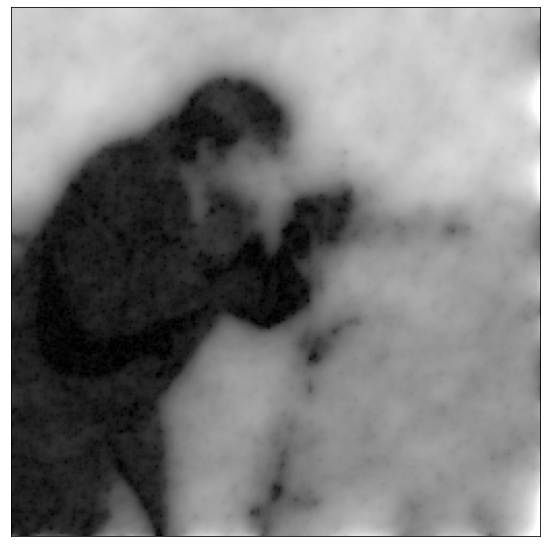}
    	\caption{\small{R-SKROCK (19.6 dB)}} 
    \end{subfigure} 
    \begin{subfigure}[t]{.24\linewidth}
        \centering
    	\includegraphics[width=\textwidth]{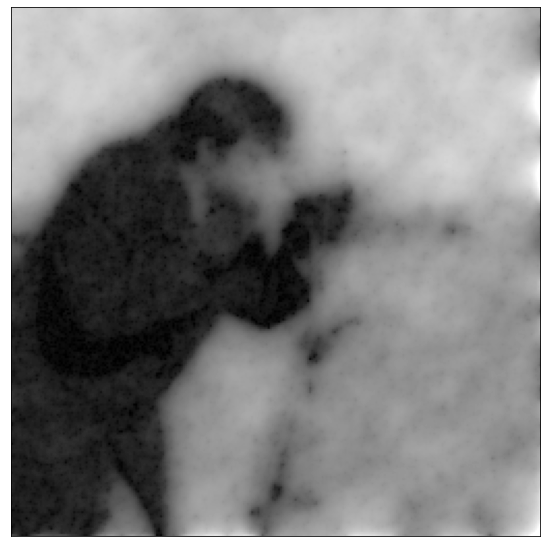}
    	\caption{\small{R-MYUULA (19.6 dB)}} 
    \end{subfigure}
    \begin{subfigure}[t]{.24\linewidth}
        \centering
    	\includegraphics[width=\textwidth]{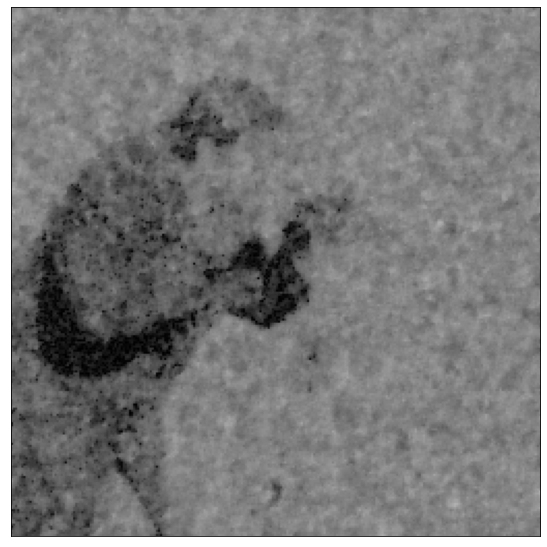}
    	\caption{\small{\hspace{-0.05cm}R-PMALA (13.3 dB)}} 
    \end{subfigure} 
    \begin{subfigure}[t]{.238\linewidth}
        \centering
    	\includegraphics[width=\textwidth]{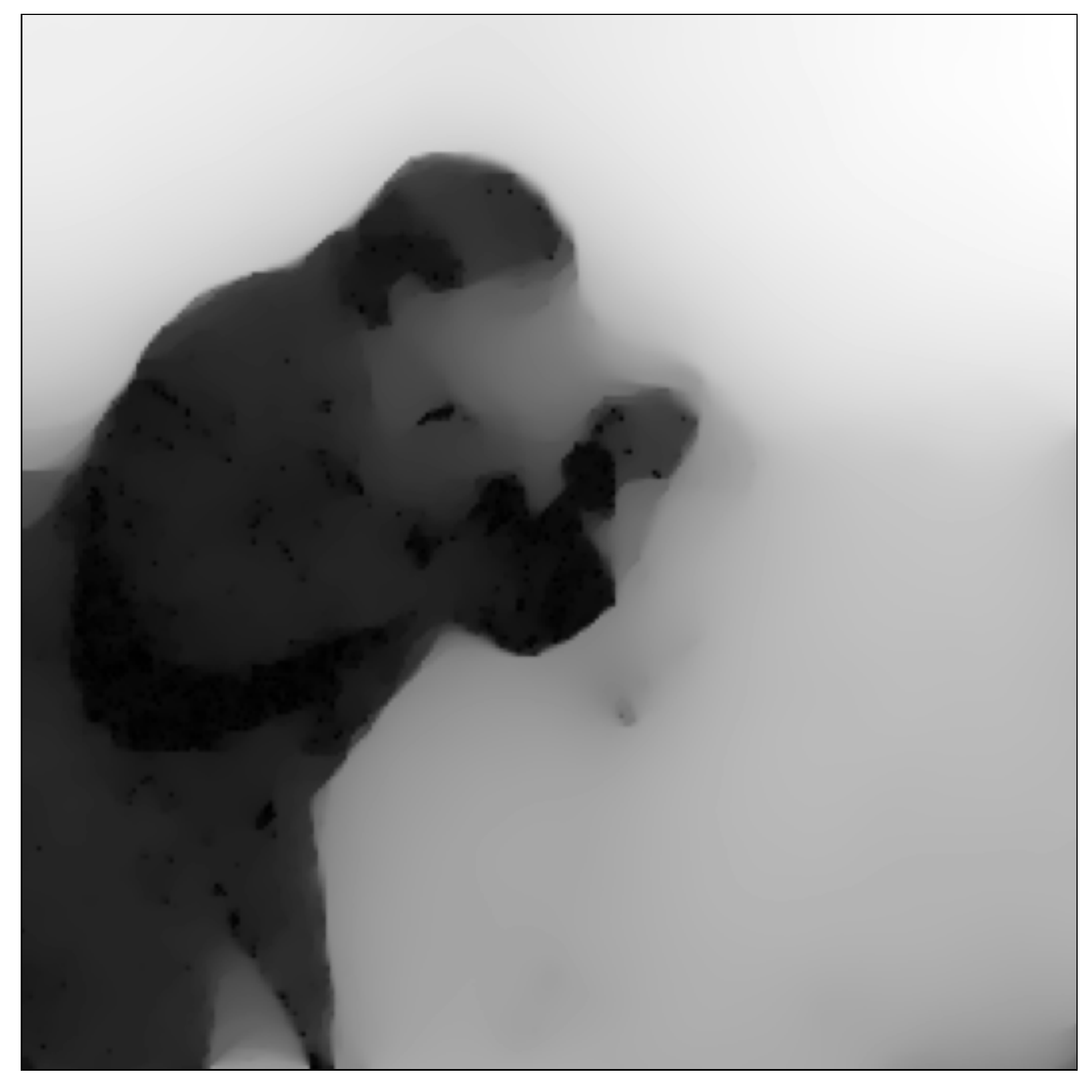}
    	\caption{\small{MAP (17.6 dB)}} 
    \end{subfigure}
    \caption{{\fontfamily{qcr}\selectfont Geometric} experiment: (a) True image $x$; (b) Observation $A^{T}t$ is displayed as  $\log(A^{T}t + 1)$ for visualization purposes; (c)-(f)  MMSE estimators as computed by the different MCMC algorithms; (g) MAP estimator.}
    \label{fig:post_mean_geo}
\end{figure}

Figure \ref{fig:post_mean_geo} presents a (backprojected) realization of $x$ generated by using the model \eqref{geo_model} with $A$ being a masking operator with $30\%$ mixing pixels and the MIV of $x$ being $10^{-2}$. For this experiment, the maximum marginal likelihood estimate of $\theta$ is $\hat{\theta}=632$. All the methods are implemented by using this value, and by following the recommendations of Section \ref{guidelines}. Regarding R-PMALA, it struggled to reach stationarity up to the available computational time because of the extremely uninformative likelihood. It required to take a very small step size ($\delta<<\delta_{max}$) to reach an acceptable acceptance rate of $57\%$.

\begin{figure}[h!]
    \centering
    \begin{subfigure}[t]{\linewidth}
        \centering
    	\includegraphics[width=\textwidth]{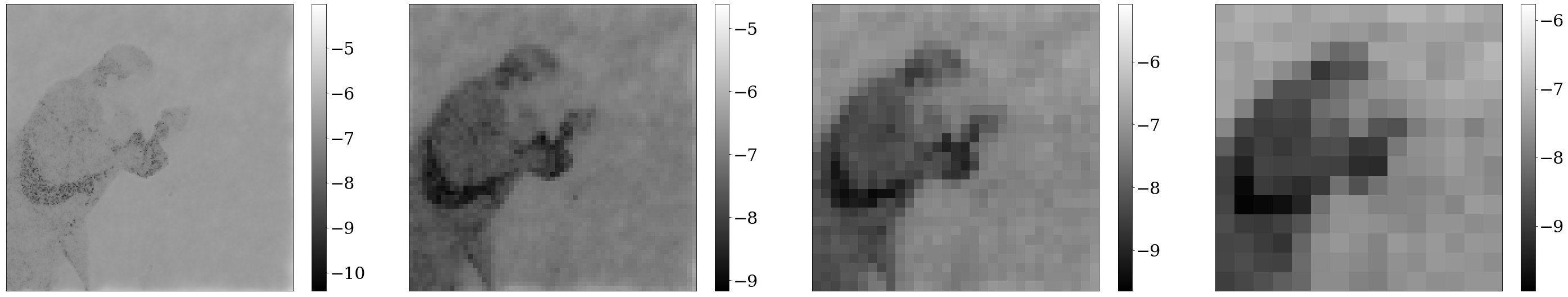}
     	\caption{R-MYULA: standard deviation (log-scaled) at different scales.}
    \end{subfigure}
    \begin{subfigure}[t]{\linewidth}
        \centering
    	\includegraphics[width=\textwidth]{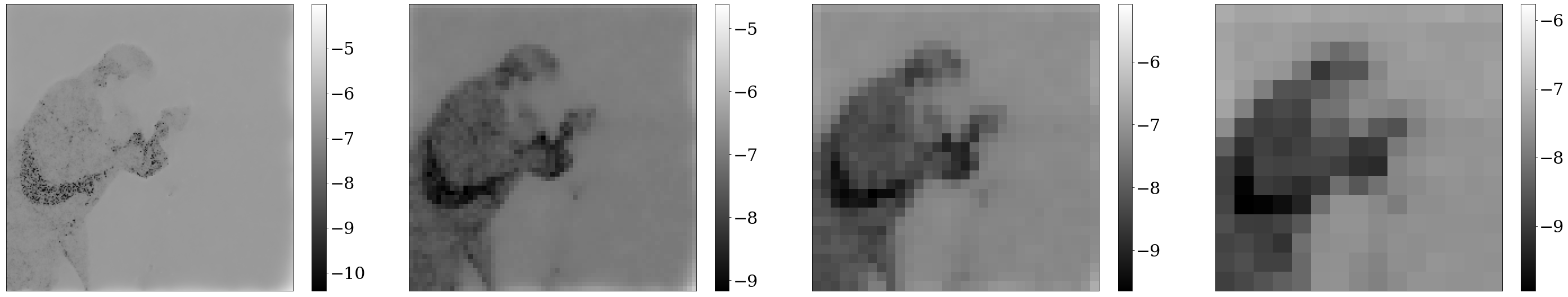}
    	\caption{R-SKROCK: standard deviation (log-scaled) at different scales.}
    \end{subfigure} 
    \begin{subfigure}[t]{\linewidth}
        \centering
    	\includegraphics[width=\textwidth]{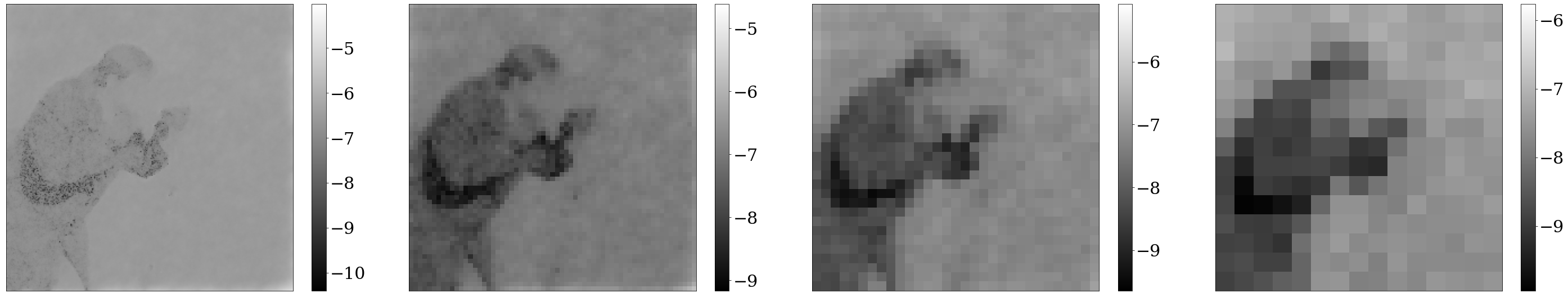}
    	\caption{R-MYUULA: standard deviation (log-scaled) at different scales.}
    \end{subfigure} 
    \caption{Marginal posterior standard deviation for the geometric inpainting problem at different scales (0,2,4,8 from left to right). The scale $i$ corresponds to a downsampling by a factor $2i$ of the original sample size. Most of the uncertainty is located at places where the likelihood is highly uninformative e.g. at the cameraman's background.}
    \label{fig:st_dev_geo}
\end{figure}

Figure \ref{fig:post_mean_geo} presents the MMSE estimates given by the proposed methods and the MAP estimate calculated by the adjusted PIDAL for geometric likelihoods. The MMSE estimates given by R-MYULA, R-SKROCK, R-MYUULA are visually similar with an apparent staircase effect while the MAP estimate returns an oversmoothed reconstruction. Since R-PMALA failed to properly converge in the available computing budget, it returned a noisy reconstruction. The PSNR values of each method are reported in the respective captions in Figure \ref{fig:post_mean_geo}. The comparison between MMSE and MAP estimation is more significant in this challenging experiment. Specifically, the PSNR values given by the MMSE estimates of R-MYULA, R-SKROCK, R-MYUULA are significantly higher than the PSNR value of the MAP estimator. 

Figure \ref{fig:NRMSE_geo} presents the evolution of the NRMSE estimation for the MMSE solutions as a function of the number of gradient evaluations (in log-scale). It can be observed that R-SKROCK has again the highest convergence speed, R-MYULA and R-MYUULA have similar convergence speed, and R-PMALA is by far the slowest algorithm.

The estimates of the pixel-wise standard deviations ($j=0$) are presented in the first column of Figure  \ref{fig:st_dev_geo}. It is observed that the estimates obtained by R-MYULA and R-MYUULA are slightly less accurate of the SKROCK estimate. High uncertainty is concentrated around the edges (e.g. cameraman figure ) and the cameraman's background, since the likelihood is highly uninformative for these pixel regions. For higher scales ($j\neq0$), the highest uncertainty is spotted at the background part of the cameraman's image as expected.

To conclude this experiment, in Figure \ref{fig:autocor_geo} we see that independence is reached fast for the components of low or median uncertainty, and is much slower for the few very uncertain pixels. Similarly to the previous experiments, R-SKROCK has better convergence properties compared with the other methods, see the decay of the ACF samples along the median and slowest components.

\begin{figure}[h!]
    \centering
    \begin{subfigure}[t]{0.32\linewidth}
        \centering
    	\includegraphics[width=\textwidth]{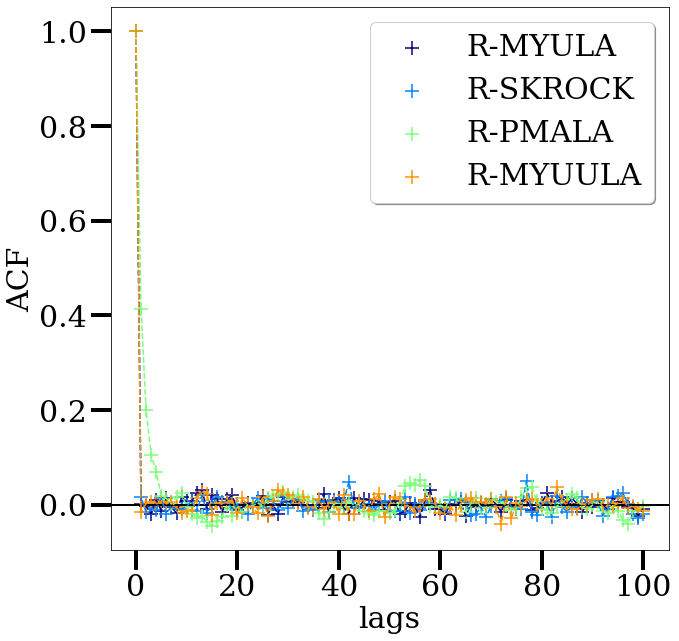}
     	\caption{Fastest component}
    \end{subfigure} 
    \begin{subfigure}[t]{0.32\linewidth}
        \centering
    	\includegraphics[width=\textwidth]{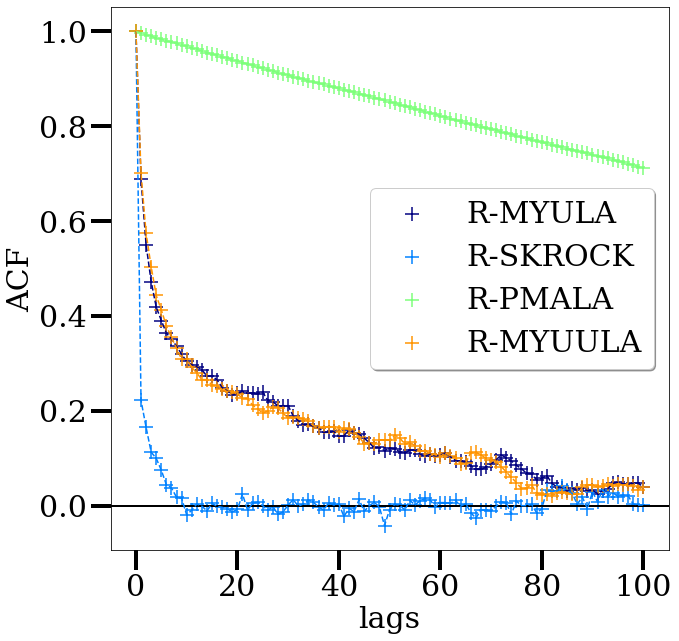}
    	\caption{Median component}
    \end{subfigure} 
    \begin{subfigure}[t]{0.32\linewidth}
        \centering
    	\includegraphics[width=\textwidth]{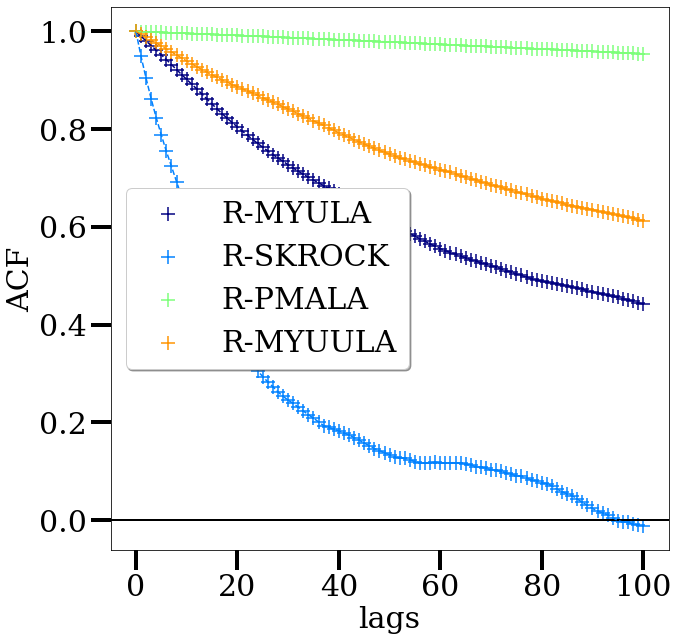}
    	\caption{Slowest component}
    \end{subfigure} 
    \caption{ACF for the fastest, median, and slowest components for the geometric inpainting problem. The ACF is shown for lags up to 100 for all images in the pixel domain.}
    \label{fig:autocor_geo}
\end{figure}

\section{Conclusion}

This work presented a new MCMC methodology based on a reflected and regularised Langevin SDE, specialised for performing Bayesian computation in low-photon imaging inverse problems involving a non-negativity constraint on the solution space. This reflected SDE is shown to be well-posed, have the desired invariant distribution, and to be exponentially ergodic under mild and easily verifiable conditions. This allowed deriving three unadjusted Langevin MCMC algorithms to perform Bayesian computation, as well as a Metropolised Langevin algorithm. The proposed approach was demonstrated with a range of experiments involving binomial, geometric, and low-intensity Poisson noise processes. In these experiments, we compared the computational accuracy and efficiency of the proposed MCMC algorithms, and illustrated their use for point estimation as well as uncertainty visualisation analyses. Moreover, we observed that in all the experiments with weakly informative data, the MMSE estimator computed by MCMC sampling outperformed MAP estimation (by convex optimization \cite{5492199}) in terms of accuracy. The methods also compared favourably against the state-of-the-art SPA proximal MCMC algorithm in the context of Poisson image deblurring \cite{8683031}. Future works will include a more detailed theoretical analysis of the proposed methodology and algorithms (e.g. non-asymptotic convergence rates \cite{durmus2016efficient}), as well as new variants suitable for data-driven priors such as plug-and-play priors encoded by image denoising neural networks \cite{laumont2021bayesian} and priors encoded by generative neural network models \cite{matts_paper}.

\appendix

\section{Proof of exponential ergodicity}\label{app:proofs_exp_ergodicity}

\begin{proof}[Proof of Lemma \ref{lem:TVconvergence}]
    Observe that $f_y^b(x), g^\lambda(x)$ both converge pointwise to $f_y(x),g(x)$ respectively as $(\lambda,b)\to (0,0)$. For $f_y^b$ this follows since $F$ is continuous and for $g^\lambda$ this follows from \cite[Equation (10)]{durmus2016efficient}. Moreover from \cite{durmus2016efficient} we have $g^\lambda \leq g$ and $f_y^b$ is lower bounded by assumption. Therefore by the dominated convergence theorem we have $\int e^{-f_y^b-g^\lambda} dx \to \int e^{-f_y-g} dx$ as $(\lambda,b)\to (0,0)$. Then by Scheff\'e's Lemma, see \cite[(5.10)]{williams1991probability}, we have 
    \begin{equation*}
        \lim_{\lambda\to 0,b\to 0} \int_{\R_+^d} \lvert \pi(x)-\pi^{\lambda,b}(x) \rvert dx =0.
    \end{equation*}
    That is, $\pi^{\lambda,b}$ converges to $\pi$ in total variation.
\end{proof}

\begin{proposition}\label{prop:smoothing}
    Let $\cP_t$ be the semigroup corresponding to the SDE \eqref{eq:RSDE}. Assume that $\nabla_x\Ulb$ is continuously differentiable and globally Lipschitz with constant $L\geq 0$. Then for all $\varphi\in C_b(\R_+^n)$ we have $\cP_t\varphi\in C_b^1(\R_+^n)$ for all $t>0$ and moreover there exists $C>0$ such that for all $t>0$ and $\varphi\in C_b(\R_+^n)$ we have
    \begin{equation*}
        \lVert \nabla_x\cP_t\varphi\rVert_\infty \leq \frac{Ce^{Ltn}}{\sqrt{t}}\lVert \varphi\rVert_\infty.
    \end{equation*}
\end{proposition}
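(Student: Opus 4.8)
The plan is to split the statement into its qualitative and quantitative parts and dispatch them separately. For the regularity claim $\cP_t\varphi\in C_b^1(\R_+^n)$ I would simply invoke the results of \cite{DEUSCHEL} recalled just before the statement: for a $C^1$ globally Lipschitz drift $\nabla_x\Ulb$ on the orthant, the reflected semigroup maps $C_b(\R_+^n)$ into $C^1(\R_+^n)$ and satisfies the Neumann conditions \eqref{eq:NPDE}, while boundedness is immediate from $\lvert\cP_t\varphi(x)\rvert=\lvert\mathbb{E}[\varphi(X_t^x)]\rvert\le\lVert\varphi\rVert_\infty$. The only real substance is therefore the decay estimate on $\lVert\nabla_x\cP_t\varphi\rVert_\infty$.

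For the gradient bound I would use a Bismut--Elworthy--Li (BEL) integration-by-parts formula adapted to the reflected dynamics. The motivation is that the naive identity $\langle\nabla_x\cP_t\varphi(x),v\rangle=\mathbb{E}[\nabla\varphi(X_t^x)\cdot\eta_t^v]$ only controls $\nabla_x\cP_t\varphi$ by the Lipschitz seminorm of $\varphi$, whereas we need control by $\lVert\varphi\rVert_\infty$ alone; the additive noise $\sqrt2\,dW_t$ is precisely what lets us transfer the derivative off $\varphi$. Concretely, writing $\eta_t^v=\nabla_xX_t^x\,v$ for the first variation of the reflected process in direction $v$, I would aim to establish the representation
\[
\langle\nabla_x\cP_t\varphi(x),v\rangle
=\mathbb{E}\!\left[\varphi(X_t^x)\,\frac{1}{\sqrt2\,t}\int_0^t\langle\eta_s^v,dW_s\rangle\right],
\]
after which Cauchy--Schwarz and the It\^o isometry give
\[
\lvert\langle\nabla_x\cP_t\varphi(x),v\rangle\rvert
\le\lVert\varphi\rVert_\infty\,\frac{1}{\sqrt2\,t}
\Big(\mathbb{E}\!\int_0^t\lVert\eta_s^v\rVert^2\,ds\Big)^{1/2},
\]
reducing everything to an a.s.\ growth bound on the derivative flow.

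For that flow I would argue that, away from the boundary, $\eta_t^v$ solves the linear ODE $\dot\eta_t=-\nabla_x^2\Ulb(X_t)\,\eta_t$ (no stochastic integral appears, since the diffusion coefficient is constant), while the local-time term contributes, at each reflection on a face $\{x_i=0\}$, a reflection of $\eta_t$ across the corresponding coordinate hyperplane. Using that $\nabla_x^2\Ulb$ is bounded entrywise by the Lipschitz constant $L$ of $\nabla_x\Ulb$, that the coordinate reflections induced by the local time are non-expansive (so they contribute no growth, by convexity of $\R_+^n$), and running a componentwise ($\ell^1$) Gr\"onwall argument over the $n$ coordinates, one obtains the pathwise bound $\lVert\eta_t^v\rVert\le e^{Ltn}\lVert v\rVert$; the factor $n$ in the exponent is the price of this coordinate-wise estimate, which is natural in the orthant where the reflection acts on the $n$ coordinate faces. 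Inserting it gives $\mathbb{E}\int_0^t\lVert\eta_s^v\rVert^2\,ds\le\lVert v\rVert^2\int_0^t e^{2Lsn}\,ds\le t\,e^{2Ltn}\lVert v\rVert^2$, and hence $\lVert\nabla_x\cP_t\varphi\rVert_\infty\le(1/\sqrt2)\,t^{-1/2}e^{Ltn}\lVert\varphi\rVert_\infty$, i.e.\ the claim with $C=1/\sqrt2$.

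The hard part will be making the BEL formula and the derivative-flow bound rigorous in the presence of reflection on the non-smooth boundary $\partial\R_+^n$: the first variation $\eta_t^v$ is only piecewise smooth, with contributions from the local time $\kappa_t$ that must be shown to be non-expansive, and the integration by parts must be justified despite the absence of a smooth boundary. I would resolve this by approximation via the penalised scheme of Section~\ref{discrete_rsde}: for each fixed penalty parameter the dynamics is an SDE on all of $\R^n$ with a $C^1$ globally Lipschitz drift, for which the BEL formula and the flow ODE are classical; I would derive the estimate uniformly in the penalty parameter (the penalty term only reinforces the dissipativity used in the Gr\"onwall step) and then pass to the limit, invoking the convergence of the penalised process to the reflected one together with the $C^1$ convergence supplied by \cite{DEUSCHEL}. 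It is this limiting argument, rather than the elementary estimates above, where the genuine care is required.
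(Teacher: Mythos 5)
Your overall strategy coincides with the paper's: both proofs rest on a Bismut--Elworthy--Li (BEL) type representation for the reflected semigroup, bound the resulting weight by $e^{Ltn}$ using that the Hessian entries of $\Ulb$ are controlled by the Lipschitz constant $L$, and extract the $t^{-1/2}$ factor from a stochastic integral against $W$. The difference is in how the representation is obtained. The paper never constructs a first-variation process: it invokes \cite[Theorem 2]{DEUSCHEL}, which supplies the formula directly for reflected SDEs on the orthant,
\begin{equation*}
\partial_{x_j}\cP_t\varphi(x)=\frac{1}{\sqrt2\,t}\sum_{i=1}^n\mathbb{E}\Big[\mathbf{E}_{0,j}\Big[\varphi(X_t^x)\int_0^t\mathbbm{1}_{\xi_s=i}\,\mathbbm{1}_{\tau>s}\,\rho_{0,s}\,dW_s^i\Big]\Big],
\end{equation*}
in which the derivative flow is replaced by a continuous-time random walk $\xi$ on the coordinate indices, weighted by $\rho_{0,s}$ (built from Hessian entries, whence $\lvert\rho_{0,s}\rvert\le e^{sLn}$, which is the source of the factor $n$ in the exponent) and killed at the time $\tau$ when the coordinate $X^{\xi_s}$ hits zero; the bound then follows from $\mathbb{E}\lvert W_t^i\rvert=\sqrt{2t/\pi}$, with no Cauchy--Schwarz step. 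This random-walk device exists precisely because the pathwise derivative flow you want to use is problematic at the non-smooth boundary, which is the difficulty you correctly anticipate.

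One substantive error in your heuristic should be flagged: normal reflection does \emph{not} reflect the first variation $\eta_t^v$ across the face $\{x_i=0\}$; it annihilates its normal component. Already in one dimension the Skorokhod map $X_t^x=x+W_t-\min\bigl(0,\inf_{s\le t}(x+W_s)\bigr)$ has $\partial_x X_t=\mathbbm{1}\{\inf_{s\le t}(x+W_s)>0\}$, so the derivative drops to zero at the first boundary hit --- this killing is exactly the indicator $\mathbbm{1}_{\tau>s}$ in the Deuschel--Zambotti formula. Since projections, like reflections, are non-expansive, your Gr\"onwall bound survives, but the mechanism as you state it is wrong. Fortunately your fallback via penalisation does not need the reflected derivative flow at all: the penalised drift has Jacobian $-\nabla_x^2\Ulb(x)-\varepsilon^{-1}\diag(\mathbbm{1}_{x_i\le 0})$, whose penalty part is negative semidefinite, so (as you assert) the classical BEL estimate for the penalised SDE is uniform in $\varepsilon$ --- an $\ell^2$ Gr\"onwall argument even gives $e^{Lt}$, sharper than the stated $e^{Ltn}$ --- and passing to the limit using convergence of the penalised scheme \cite{petterson} yields a Lipschitz bound on $\cP_t\varphi$ with the stated constant. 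Two small repairs are needed to make this rigorous: $\beta_\varepsilon$ is Lipschitz but not $C^1$, so it must be mollified before the classical BEL formula applies; and the limiting argument only delivers the Lipschitz estimate, so the $C_b^1$ regularity assertion must still be imported from \cite{DEUSCHEL}, exactly as the paper does.
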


\begin{proof}[Proof of Proposition \ref{prop:smoothing}]
    Fix $\varphi\in C_b(\R_+^n)$ then by \cite{DEUSCHEL} we have that $\cP_t \varphi\in C^1(\R_+^n)$ and we have the following representation by \cite[Theorem 2]{DEUSCHEL}
    \begin{equation}\label{eq:BEL}
        \frac{\partial}{\partial x_j}\cP_t\varphi(x) = \frac{1}{\sqrt{2}t}\sum_{i=1}^d\mathbb{E}\left[\textbf{E}_{0,j}\left[\varphi(X_t^x)\int_0^t\mathbbm{1}_{\xi_s=i}\mathbbm{1}_{\tau>s}\rho_{0,s} dW_s^i\right]\right].
    \end{equation}
    Here $\textbf{E}_{0,j}$ is the expectation of the process $\xi_s$ for with $\xi_0=j$, and $\xi_s$ is a continuous time Markov chain on $\{1,\ldots,d\}$ with generator
    \begin{equation*}
        \textbf{L}\varphi(i)=\sum_{k=1}^d \lvert c_t(i,k)\rvert[\varphi(k)-\varphi(i))], \quad  c_t(i,k)=\frac{\partial^2 \Ulb}{\partial_{x_i}\partial_{x_j}}(X_t^x).
    \end{equation*}
    Let $\eta_{\ell}$ be the sequence of jump moments of $\xi$, i.e. $\eta_{\ell+1}=\inf\{s>\eta_\ell:\xi_s\neq\xi_{\eta_\ell}\}$. Let $\tau$ be the stopping time $$\tau=\inf\{s>0:X_s^{x,\xi_s}=0\}.$$ With this notation we can define $\rho_{0,s}$ as
    \begin{equation*}
        \rho_{0,s}=\exp\left(\int_0^s\sum_{k\neq \xi_r}\lvert c_r(\xi_r,k)\rvert dr + \int_0^sc_r(\xi_r,\xi_r)dr\right)\prod_{0<\eta_k\leq s}\mathrm{sign}(c_{\eta_{k}}(\xi_{\eta_{k-1},\xi_{\eta_k}})).
    \end{equation*}
    Since $\nabla_x\Ulb(x)$ is globally Lipschitz we have that $c_t(i,k)$ is bounded in $(t,i,k)$, i.e. there exists $L>0$ such that $c_t(i,k)\leq L$ which gives the following bound on $\rho_{0,s}$:
    \begin{equation*}
        \lvert\rho_{0,s}\rvert\leq e^{sLn} .
    \end{equation*}
    Using this bound in \eqref{eq:BEL} we have
    \begin{equation*}
        \left\lvert\frac{\partial}{\partial x_j}\cP_t\varphi(x)\right\rvert \leq \frac{e^{tLn}}{t}\lVert \varphi\rVert_\infty\sum_{i=1}^d\mathbb{E}\left[\lvert W_s^i\rvert\right]=\sqrt{\frac{2}{\pi}}\frac{e^{tLn}n}{\sqrt{t}}\lVert f\rVert_\infty.
    \end{equation*}
\end{proof}

As a consequence of the above smoothing result we show that every compact set is small. 

\begin{definition}
A set $C\subseteq \R_+^n$ is called a small set if there exist $t>0$ a probability measure $\nu$ with $\nu(C)=1$ and positive constant $\eta>0$ such that
\begin{equation*}
    \cP_t\mathbbm{1}_E(x)\geq \eta \nu(E) \quad \text{ for all } x\in C, E\subseteq \R_+^n \text{ measureable}.
\end{equation*}
\end{definition}

\begin{proposition}\label{prop:small}
    Let $\cP_t$ be the semigroup corresponding to the SDE \eqref{eq:RSDE}. Assume that $\nabla_x\Ulb$ is continuously differentiable and globally Lipschitz. Then every compact set $K\subseteq \R_+^n$ is small.
\end{proposition}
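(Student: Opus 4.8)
The plan is to establish the defining minorisation directly, by producing for one fixed time $t>0$ a uniform lower bound on the transition kernel $\cP_t(x,\cdot)=\cP_t\mathbbm{1}_{(\cdot)}(x)$ over $x\in K$ in terms of a single reference measure. The strategy rests on two ingredients: the smoothing of Proposition \ref{prop:smoothing}, which (being an instance of the Bismut--Elworthy--Li formula \eqref{eq:BEL}) already yields that $\cP_t$ sends bounded measurable functions to Lipschitz functions, i.e.\ the strong Feller property, together with the interior ellipticity recorded after \eqref{eq:NPDE}; and a strict-positivity (accessibility) statement for the kernel. Conceptually this is the classical route \emph{strong Feller $+$ irreducibility $\Rightarrow$ small compact sets}, which I will make concrete through the transition density.

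First I would show that for each $t>0$ the measure $\cP_t(x,\cdot)$ is absolutely continuous with respect to Lebesgue measure on $\R_+^n$, with density $p_t(x,y)$. The gradient bound $\lVert\nabla_x\cP_t\varphi\rVert_\infty\le Ce^{Ltn}t^{-1/2}\lVert\varphi\rVert_\infty$ of Proposition \ref{prop:smoothing} first extends from $\varphi\in C_b(\R_+^n)$ to arbitrary bounded measurable $\varphi$ by a monotone-class/dominated-convergence approximation (each approximant $\cP_t\varphi_k$ is Lipschitz with the same constant and $\cP_t\varphi_k\to\cP_t\varphi$ pointwise), so that $x\mapsto\cP_t\mathbbm{1}_E(x)$ is Lipschitz for every measurable $E$. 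Combined with the interior regularity $\cP_t\varphi\in C^\infty(\Rint)$ from Proposition \ref{prop:smoothing}, this gives a density $p_t$ that is continuous on $\Rint\times\Rint$.

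Second, and this is the main obstacle, I would prove that $p_t(x,y)>0$ throughout the interior. The smoothing estimate controls regularity but is silent about positivity, so a separate argument is needed. I would obtain it either from a parabolic Harnack inequality / strong maximum principle for the Neumann problem \eqref{eq:NPDE}, or, more probabilistically, from a controllability (support-theorem) argument: since the noise in \eqref{eq:RSDE} is additive and non-degenerate and $\nabla_x\Ulb$ is globally Lipschitz, the reflected process started at any $x$ can be steered into an arbitrarily small ball about any interior point $y$ by time $t$ with positive probability, whence $\cP_t(x,B)>0$ and, by continuity, $p_t(x,y)>0$. The delicate point is that the boundary of $\R_+^n$ is only Lipschitz (non-smooth for $n>1$), so the support argument must be carried out in a way that accommodates the local-time reflection term $d\kappa_t$.

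Once continuity and strict positivity of $p_t$ are available the conclusion is routine. I would fix a compact set $K'\subseteq\Rint$ of positive Lebesgue measure (taking $K'\subseteq\mathrm{int}\,K$ when $K$ has non-empty interior, so that $\nu(K)=1$ below, and otherwise enlarging $K$ to a compact set with non-empty interior, as a lower bound valid on a larger compact set restricts to $K$). By compactness and positivity, $c:=\min_{(x,y)\in K\times K'}p_t(x,y)>0$, and for every $x\in K$ and measurable $E$,
\begin{equation*}
\cP_t\mathbbm{1}_E(x)=\int_E p_t(x,y)\,dy\ge \int_{E\cap K'}p_t(x,y)\,dy\ge c\,\mathrm{Leb}(E\cap K').
\end{equation*}
Setting $\nu(\cdot)=\mathrm{Leb}(\cdot\cap K')/\mathrm{Leb}(K')$ and $\eta=c\,\mathrm{Leb}(K')$ then gives $\cP_t\mathbbm{1}_E(x)\ge\eta\,\nu(E)$ for all $x\in K$ and all measurable $E$, which is exactly the required minorisation and shows that $K$ is small.
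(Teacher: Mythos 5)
Your overall skeleton---continuity in the initial condition via Proposition \ref{prop:smoothing}, positivity of the transition kernel, then a compactness minorisation---parallels the paper's, and your final step (a uniform lower bound against normalised Lebesgue measure on a compact $K'\subseteq\Rint$) is sound, arguably cleaner than the paper's normalised-infimum construction of $\nu$. But the two ingredients this step rests on are exactly the ones you do not establish, and each is a genuine gap. First, the claim that the gradient bound plus interior ellipticity ``gives a density $p_t$ that is continuous on $\Rint\times\Rint$'' is a non sequitur: Proposition \ref{prop:smoothing} regularises $\cP_t\varphi$ in the \emph{backward} variable $x$, and no amount of smoothness of $x\mapsto\cP_t\varphi(x)$ implies that the law of $X_t^x$ is absolutely continuous in the \emph{forward} variable, let alone that a jointly continuous version of the density exists. (Moreover, your bound $c=\min_{(x,y)\in K\times K'}p_t(x,y)>0$ requires control of $p_t(x,\cdot)$ for $x$ ranging over all of $K$, which may meet $\partial\R_+^n$, while your continuity claim is interior-only.) Second, strict positivity---the irreducibility half of ``strong Feller $+$ irreducible''---is where the real work lies, and you leave it at the level of ``I would obtain it either from a Harnack inequality or a support theorem,'' yourself conceding that the non-smooth boundary with the local-time term $d\kappa_t$ is the delicate point. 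Neither route is available off the shelf for normal reflection in the orthant, so as written the proof is incomplete precisely at its crux.

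The paper closes both gaps in one stroke with an idea absent from your proposal: a Girsanov change of measure. After verifying Novikov's condition so that the exponential martingale $Z_t$ is a true martingale, the law of $X_t^x$ is shown to be \emph{equivalent} to that of a reflected Brownian motion on $\R_+^n$, and \cite[Lemma 7.9]{Harrison} states that the latter charges exactly the sets of positive Lebesgue measure; absolute continuity and strict positivity of the density $p_t(\cdot;x)$ thus follow simultaneously, with no Harnack inequality, no controllability argument, and no difficulty at the corners of the domain. Proposition \ref{prop:smoothing} is then used only for what it actually delivers: continuity of $x\mapsto\mathbb{P}(X_t^x\in K)$, obtained by approximating $\mathbbm{1}_K$ by continuous functions $\varphi_k$ and extracting a locally uniform limit via Arzel\'a--Ascoli, whence $\inf_{x\in K}\mathbb{P}(X_t^x\in K)>0$ by compactness. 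If you replace your density step by this Girsanov argument, your normalised-Lebesgue reference measure and minorisation can be retained and the proof closes.
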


\begin{proof}[Proof of Proposition \ref{prop:small}]
   First we show using a Girsanov transformation that the law of $X_t^x$ is equivalent to the law of a reflected Brownian motion and hence admits a positive density on $\R_+^n$ with respect to Lebesgue measure. Define the exponential martingale
\begin{equation*}
    Z_t=\exp\left(\frac{1}{\sqrt{2}}\int_0^t \nabla \Ulb(X_s)dW_s-\frac{1}{4}\int_0^t \nabla_x \Ulb(X_s)^2ds\right)
\end{equation*}
we observe that this is a true martingale by Novikov's condition which is satisfied since $\nabla_x \Ulb(X_s^x)$ is bounded. Then by Girsanov's theorem 
\begin{equation*}
\tilde{W}_t=W_t-\frac{1}{\sqrt{2}}\int_0^t\nabla \Ulb(X_s)ds
\end{equation*}
is a Brownian motion under $\mathbb{Q}$ where $\dfrac{d\mathbb{Q}}{d\mathbb{P}}=Z_t$. Therefore under $\mathbb{Q}$ we have 
\begin{equation*}
    X_t=x-\int_0^t\nabla \Ulb(X_s)ds +\sqrt{2}W_t +L_t = x+\sqrt{2}\tilde{W}_t+L_t
\end{equation*}
is a reflected Brownian motion. By \cite[Lemma 7.9]{Harrison} we have for any measurable $E\subseteq \R_+^N$ and $x\in \R_+^N$ that $\mathbb{Q}(X_t^x\in E)=0$ if and only if $E$ has zero Lebesgue measure. Hence $\mathbb{P}(X_t^x\in E)=0$ if and only if $E$ has zero Lebesgue measure since $\mathbb{P}$ is equivalent to $\mathbb{Q}$. In particular we have that the law of $X_t^x$ admits a density $p_t(\cdot;x)$ with respect to Lebesgue measure on $\R_+^n$ and the density is strictly positive.

Fix a compact set $K\subseteq \R_+^n$ and $t>0$. Define the measure $\nu$ by
\begin{equation*}
    \nu(E)=\frac{\inf_{x\in K}\mathbb{P}(X_t^x\in E\cap K)}{\inf_{x\in K}\mathbb{P}(X_t^x\in K)}, \text{ for any measurable }  E\subseteq \R_+^n,
\end{equation*}
it remains to show that $\nu$ is a probability measure, i.e. that $\inf_{x\in K}\mathbb{P}(X_t^x\in K)>0$. Since the law of $X_t^x$ admits a strictly positive density we know that $\mathbb{P}(X_t^x\in K)>0$ for each $x\in K$. If we have that $x\mapsto \mathbb{P}(X_t^x\in K)$ is continuous then since $K$ is compact, $\mathbb{P}(X_t^x\in K)$ attains its minimum value and that value must be positive, i.e. $\inf_{x\in K}\mathbb{P}(X_t^x\in K)>0$. It remains to show $x\mapsto \mathbb{P}(X_t^x\in K)$ is continuous. Let $\varphi(x)=\mathbbm{1}_K(x)$ and approximate this function by the sequence $\{\varphi_k\}_{k\geq 0}$ where 
\begin{equation*}
    \varphi_k(x)=\max\left(1-\frac{1}{k}d(x,K),0\right)
\end{equation*}
and $d(x,K)$ denotes the distance between $x$ and the set $K$. Note that $\varphi_k\in C_b(\R_+^n)$, $\lVert \varphi_k\rVert_\infty\leq 1=\lVert \varphi\rVert_\infty$ and $\varphi_k(x)\to \varphi(x)$ as $k\to\infty$ for each $x\in \R_+^n$. By the dominated convergence theorem $\cP_t\varphi_k(x)\to\cP_t\varphi(x)$ for each $x\in \R_+^n,t>0$. On the other hand by Proposition \ref{prop:smoothing} the derivative of $\cP_t\varphi_k$ is bounded uniformly in $k$ and $x$ and in particular, for fixed $t>0$ the sequence $\{\cP_t\varphi_k\}$ is uniformly bounded and equicontinuous therefore by Arzel\'a–Ascoli Theorem there exists some $\psi\in C_b(\R_+^n)$ such that $\cP_t\varphi_k$ converges locally uniformly to $\psi$. Since $\cP_t\varphi_k$ converges pointwise to $\cP_t\varphi$ we have that $\psi=\cP_t\varphi$ and hence $\cP_t\varphi\in C_b(\R_+^n)$.
\end{proof}

Before we prove Theorem \ref{thm:expergodicityconvex} we need the following elementary lemma from \cite[Lemma 2.2]{Bakrysimpleproof}. Note that although this Lemma is proved for functions on the whole space $\R^n$ the proof still holds when restricted to the space $\R^n_+$.
\begin{lemma}\label{lem:convexbound}
If $\Ulb$ is differentiable, convex and $\int_{\R^n_+} e^{-\Ulb(x)}dx<\infty$ then there exist $\alpha>0$ and $R>0$ such that for $\lvert x\rvert \geq R$, \eqref{eq:convexlowerbound} holds.
\end{lemma}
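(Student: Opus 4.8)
The plan is to reduce the gradient lower bound \eqref{eq:convexlowerbound} to a statement about the \emph{linear growth} of $\Ulb$, and then to extract that growth from the integrability of $e^{-\Ulb}$ via the geometry of convex sublevel sets. The starting point is the elementary consequence of convexity obtained by writing the supporting-hyperplane inequality at a point $x$ and evaluating it at the origin: since $0\in\R_+^n$ and $\Ulb$ is differentiable and convex on $\R_+^n$, we have $\Ulb(0)\ge \Ulb(x)+\langle \nabla\Ulb(x),0-x\rangle$, that is
\begin{equation*}
    \langle x,\nabla\Ulb(x)\rangle \ge \Ulb(x)-\Ulb(0), \qquad x\in\R_+^n.
\end{equation*}
Thus it suffices to prove that there are constants $a>0$ and $b\in\R$ with $\Ulb(x)\ge a\lvert x\rvert - b$ for all $x\in\R_+^n$; combining the two bounds and absorbing the constant $b+\Ulb(0)$ into an enlarged threshold $R$ then yields \eqref{eq:convexlowerbound} with, say, $\alpha=a/2$.

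First I would establish that the sublevel sets $K_c=\{x\in\R_+^n:\Ulb(x)\le c\}$ are bounded. Each has finite Lebesgue measure, because $\lvert K_c\rvert\le e^{c}\int_{\R_+^n}e^{-\Ulb}\,dx<\infty$, and it is convex. A convex set of positive finite measure must be bounded: if it were unbounded it would contain a ray, and the convex hull of that ray with an interior point has infinite volume, contradicting integrability of $e^{-\Ulb}$. Since some $K_c$ is then compact and nonempty, $\Ulb$ attains its minimum at a point $x^*\in\R_+^n$; I write $m=\Ulb(x^*)$ and fix $\rho>0$ with $\{\,\Ulb\le m+1\,\}\subseteq B(x^*,\rho)$.

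The linear growth then follows from the monotone-slope property of convex functions along segments. For any $x\in\R_+^n$ the segment from $x^*$ to $x$ stays in $\R_+^n$ by convexity of the orthant, so $\phi(r)=\Ulb\big(x^*+r(x-x^*)/\lvert x-x^*\rvert\big)$ is convex and nondecreasing in $r\ge 0$ with $\phi(0)=m$. Since $r\mapsto (\phi(r)-m)/r$ is nondecreasing, the radius $R(c)$ at which $\phi(R(c))=m+c$ obeys $R(c)\le \rho\,c$ for $c\ge 1$; equivalently $\Ulb(x)\ge m+\lvert x-x^*\rvert/\rho$ once $\lvert x-x^*\rvert\ge\rho$. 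This gives $\Ulb(x)\ge a\lvert x\rvert-b$ with $a=1/\rho$ and a suitable constant $b$ depending on $m,\rho,\lvert x^*\rvert$, which closes the argument.

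The one point requiring care, and the main obstacle, is obtaining \emph{linear} rather than merely logarithmic growth: a crude volume bound $\lvert K_c\rvert\le e^{c}Z$ only controls the radius by $e^{c}$ and hence yields the useless estimate $\Ulb(x)\gtrsim\log\lvert x\rvert$. The sharp linear rate is exactly what the convex monotone-slope estimate supplies, and it is also where the restriction to $\R_+^n$ enters: one must use that the segments emanating from $x^*$ towards points of $\R_+^n$ remain in the domain, so the one-dimensional convexity argument applies even though $\R_+^n$ is a cone and not every ray from $x^*$ stays admissible. With that observation in place, the proof of \cite[Lemma 2.2]{Bakrysimpleproof} transfers from $\R^n$ to $\R_+^n$ essentially verbatim.
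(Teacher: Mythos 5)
Your proof is correct, and it is essentially the argument the paper relies on: the paper does not prove Lemma \ref{lem:convexbound} itself but cites \cite[Lemma 2.2]{Bakrysimpleproof} with the remark that the proof transfers verbatim to $\R^n_+$, and your write-up is a faithful self-contained reconstruction of that standard argument (gradient inequality anchored at $0\in\R_+^n$, boundedness of the convex sublevel sets from integrability, then linear growth via the monotone-slope property), including the one genuinely orthant-specific point --- that segments from the minimiser to points of $\R_+^n$ stay in the domain --- which is exactly the observation the paper's remark implicitly requires.
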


\begin{proof}[Proof of Theorem \ref{thm:expergodicityconvex}]
    By Proposition \ref{prop:small} we have that every compact set is small and hence is petite (by \cite[Proposition 5.5.3]{meyn2012markov}) therefore to verify the conditions of Theorem \ref{thm:MeynTweedie} it remains to find a function $V(x)$ which satisfies Hypothesis \ref{hyp:Lyapunov}. 
    Set
    \begin{equation*}
        V(x)=\exp\left(\gamma (1+\lVert x\rVert^2)^{\frac{1}{2}}\right).
    \end{equation*}
    Note that
    \begin{equation*}
        \partial_{x_i}V(x) = \frac{\gamma_ix_i}{(1+\lVert x\rVert^2)^{\frac{1}{2}}}V(x)
    \end{equation*}
    In particular, we see that $\partial_{x_i}V(x)=0$ if $x_i=0$ and hence $V$ is in the domain of the extended generator $\cL_\ell$ for each $\ell$. It remains to show that $V(x)$ satisfies \eqref{eq:Lyapunovcondition}. For any $x\in O_\ell$ we have
    \begin{align}
    \frac{\cL_\ell V(x)}{V(x)} = -\frac{\gamma}{(1+\lVert x\rVert^2)^\frac{1}{2}}\langle\nabla_xU(x),x\rangle+\frac{\gamma n}{(1+\lVert x\rVert^2)^\frac{1}{2}}+\frac{\gamma^2\lVert x\rVert^2}{1+\lVert x\rVert^2}.
    \end{align}
    For $\lvert x\rvert\geq R$, using Lemma \ref{lem:convexbound} we have
    \begin{align*}
    \frac{\cL_\ell V(x)}{V(x)} &\leq -\frac{\gamma}{(1+\lVert x\rVert^2)^\frac{1}{2}}\alpha\lVert x\rVert+\frac{\gamma n}{(1+\lVert x\rVert^2)^\frac{1}{2}}+\frac{\gamma^2\lVert x\rVert^2}{1+\lVert x\rVert^2}.
    \end{align*}
    Since $z/\sqrt{1+z^2}$ is increasing for $z>0$ we can use that $\lVert x\rVert\geq R$ to obtain the bound
    \begin{align}
    \frac{\cL_mV(x)}{V(x)} \leq \gamma\left(\gamma-\alpha\frac{R}{(1+R^2)^\frac{1}{2}}\right)+\frac{\gamma n}{(1+R^2)^\frac{1}{2}}.
    \end{align}
    Set $\gamma=\alpha/4$ and let $R$ be sufficiently large that $\frac{\gamma n}{(1+R^2)^\frac{1}{2}}\leq \alpha/4$, $R/(1+R^2)^{\frac{1}{2}}\geq 3/4$ and that $\frac{\gamma n}{(1+R^2)^\frac{1}{2}} \leq \alpha^2/16$ then we have
    \begin{equation*}
        \frac{\cL_\ell V(x)}{V(x)} \leq -\frac{1}{16}\alpha^2.
    \end{equation*}
    Therefore Hypothesis \ref{hyp:Lyapunov} is satisfied and by Theorem \ref{thm:MeynTweedie} we have that $X_t^x$ is exponentially ergodic.

    In order to show that $\pi$ is given by \eqref{eq:pilb} it is sufficient to show that for some core $\mathcal{C}$ of the domain of $\cL$ we have
    \begin{equation*}
        \int_{\R_+^n} \cL \varphi(x) e^{-\Ulb(x)} dx=0 \quad \text{ for all } \varphi\in \mathcal{C}.
    \end{equation*}
    Here we are viewing $\cL$ as the infinitesimal generator of the semigroup $\cP_t$ in the space $C_0(\R_+^n)$ (i.e. the set of continuous functions vanishing at infinity endowed with the supremum norm).
    Let 
    \begin{equation*}
        \mathcal{C}=\{\varphi\in C^\infty((0,\infty)^n)\cap C^1_b(\R_+^n): \cL \varphi\in C_0(\R_+^n), \, \partial_{x_i}\varphi(x)=0 \text{ if } x_i=0\}.
    \end{equation*}
    That is $\mathcal{C}$ is the set of functions which are infinitely differentiable in the interior of $\R_+^n$, are $C^1$ up to the boundary, is bounded with bounded first order derivative, the generator applied to $\varphi$ belongs to $C_0(\R_+^n)$ and satisfies the boundary conditions of the PDE \eqref{eq:NPDE}.
    Since $\cP_t$ preserves the set $C$, and $C$ is contained within the domain of $\cL$ and is dense in $C_0(\R_+^n)$ we have that $C$ is a core of $\cL$ by \cite[Proposition 3.3]{Ethier}. Fix $f\in C$ then by integration by parts
    \begin{align*}
        \int_{\R_+^n} \cL \varphi(x)e^{-\Ulb(x)} dx&= \sum_{i=1}^n \int_{\R_+^n} \left(\partial_{x_i}\Ulb(x)\partial_{x_i} \varphi(x) +\partial_{x_i}^2\varphi(x)\right) e^{-\Ulb(x)}dx\\
        &= \sum_{i=1}^n \int_{\R_+^n} [-\partial_{x_i}\Ulb(x)\partial_{x_i} \varphi(x) +\partial_{x_i}\Ulb(x)\partial_{x_i} \varphi(x)] e^{-\Ulb(x)}dx\\
        &+\left[\partial_{x_i}\varphi(x)e^{-\Ulb(x)}\right]\big\rvert_{x_i=0}\\
        &=0.
    \end{align*}
\end{proof}

%

\bigskip

\end{document}